\numberwithin{equation}{section}
\newtheorem{thm}{Theorem}[section]
\newtheorem{defn}[thm]{Definition}
\newtheorem{observation}[thm]{Observation}
\newtheorem{lem}[thm]{Lemma}
\newtheorem{cor}[thm]{Corollary}
\newtheorem{conjecture}[thm]{Conjecture}
\begin{document}

\author{Sa\'ul A. Blanco\affiliationmark{1} \and Charles Buehrle\affiliationmark{2} \and Akshay Patidar\affiliationmark{3}\thanks{Supported by the Global Talent Attraction Program (GTAP) administered by the School of Informatics, Computing, and Engineering at Indiana University.}}

\title[Stacks requiring four flips to be sorted]{On the number of pancake stacks requiring four flips to be sorted}

\affiliation{Department of Computer Science, Indiana University\\
Department of Mathematics, Physics, and Computer Studies, Notre Dame of Maryland University\\
Department of Computer Science and Engineering, Indian Institute of Technology}

\keywords{Pancake graph, burnt Pancake graph, Cayley graphs, cycle embedding}

\received{2019-2-27}

\revised{2019-8-6}

\accepted{2019-10-15}

\publicationdetails{21}{2019}{2}{5}{5214}
\maketitle 
\begin{abstract}
    Using existing classification results for the 7- and 8-cycles in the pancake graph, we determine the number of permutations that require 4 pancake flips (prefix reversals) to be sorted.  A similar characterization of the 8-cycles in the burnt pancake graph, due to the authors, is used to derive a formula for the number of signed permutations requiring 4 (burnt) pancake flips to be sorted. We furthermore provide an analogous characterization of the 9-cycles in the burnt pancake graph. Finally we present numerical evidence that polynomial formulas exist giving the number of signed permutations that require $k$ flips to be sorted, with $5\leq k\leq9$. 
\end{abstract}

\section{Introduction}
The idea of sorting permutations is a classical one in combinatorics and computer science. Of particular interest is when the sorting is performed utilizing only prefix reversals (see, for example,~\cite{Chitt,Cibulka, GatesPapa}). Problems like the \emph{pancake problem} that ask to determine the minimum number of prefix-reversal flips that are needed to sort any permutation in $S_n$ are computationally hard to solve~\cite{BulFerRusu}. In this context a natural question arises: How many permutations in $S_n$ require $k$ prefix-reversal flips to be sorted?  One of our main contributions is the answer to the case $k=4$ (the cases $k\leq 3$ are trivial) for both permutations and signed permutations. Our methods rely on an existing classification of the 6-,7-, and 8-cycles in the pancake graph and a classification of the 8-cycles in the burnt pancake graph, which is due to the authors~\cite{BBP2018}. In particular, our main results are the following.
\begin{enumerate}
\item[(I)]\label{main:one} We provide an explicit formula that gives the number of permutations in $S_n$ that require exactly four prefix reversal flips to be sorted. This formula is given by a simple integer-valued polynomial
\begin{displaymath}\frac{1}{2} \left(2 n^4 - 15 n^3 + 29 n^2 + 6 n - 34\right),\end{displaymath} where $n\geq4$. The details are given in Section~\ref{sec:sn}.
\item[(II)] We give an explicit formula for the number of signed permutations in $B_n$ that require exactly four prefix-reversal flips to be sorted. This formula is also given by a simple integer-valued polynomial, namely,
\[
 \frac{1}{2} n(n-1)^2(2n-3),
\] 
with $n\geq1$. The details are given in Section~\ref{sec:bn}.
\item[(III)] We also provide a classification of all the 9-cycles in the burnt pancake graph. Concretely, we prove that all of these 9-cycles can be described by two canonical forms. The details are given in Section~\ref{sec:9cycles}.
\end{enumerate} 

We point out that the polynomial in (I) can also be derived from the algorithm described in~\cite{HV16}, where the authors utilize structural properties of certain permutations to obtain generating functions. Our methods are entirely elementary and rely on the classification of cycles inside the pancake and burnt pancake graph and on the principle of inclusion-exclusion. Presently, the algorithm in~\cite{HV16} cannot be applied to signed permutations, and in particular, no other proof for our second main result (II) is known. 

The reader is referred to Section~\ref{sec:preliminaries} for the basic definitions and notation used throughout the paper. We end the paper with conjectures for closed formulas giving the number of signed permutations requiring exactly $k$ prefix-reversal flips to be sorted with $5\leq k\leq 9$, and other conjectures.  

\section{Preliminaries and notation}\label{sec:preliminaries}
Throughout this note, $n$ will denote a positive integer greater than 1, and for a positive integer $k\leq n$, $[k]$ will denote the set $\{1,2,\ldots,k\}$. We will use $S_n$ to denote the set of permutations of the set $[n]$. Furthermore, if $2\leq i\leq n$, we denote by $r_i$ the following permutation, written in one-line notation:
\[
r_i= i\,(i-1)\,\cdots\,2\,1\,(i+1)\,\cdots\,n.
\]

The elements of the set $R:=\{r_i\}_{i=2}^n$ are referred to as \textit{prefix reversals} or \textit{pancake flips}. 

Similarly one can define prefix reversals on the group of signed permutations. A \textit{signed permutation} is a permutation $w$ of the set $[\pm n]:=\{-n,-(n-1),\ldots,-2,-1,1,2,\ldots n-1,n\}$ satisfying $w(-i)=-w(i)$ for all $i\in[n]$. For convenience, we will use $\underline{i}$ instead of $-i$. We will use \textit{window notation} (see~\cite[Section 8.1]{BjornerBrenti}) to denote signed permutations. More specifically, we will use $[w(1)\,w(2)\,\ldots\,w(n)]$ to denote $w$. In Coxeter groups literature, the group of signed permutations is denoted by $B_n$ (see~\cite[Chapter 8]{BjornerBrenti}). In this context, if $i\in [n]$, a \text{signed prefix reversal} is given by
\[
r_i^{B}=[\underline{i}\,\underline{i-1}\,\cdots\,\underline{1}\,(i+1)\,(i+2)\,\cdots\,n].
\]

We refer to the elements of the set $R^B:=\{r_i^B\}_{i=1}^n$ as \textit{signed prefix reversals} or \textit{burnt pancake flips}. If clear from the context, we may drop the words ``signed," ``burnt," and the $B$ superscript. It is worth pointing out that for any $i\in[n]$, $r_i$ and $r_i^B$ are both involutions (elements of order two) of $S_n$ and $B_n$, respectively. That is, 
\[ \left(r_i\right)^2 = 1\, 2\, \cdots\, (i-1)\, i\, (i+1) \cdots n\text{, and}\]
\[ \left(r_i^B\right)^2 = [1\, 2\, \cdots\, (i-1)\, i\, (i+1) \cdots n]. \]

The graph $P_n:=(S_n,E_n)$ where
\[
E_n:= \{\{\pi,\pi r_i\}: \pi\in S_n,2\leq i\leq n\}
\] is called the \emph{pancake graph} of order $n$. Similarly, the graph $BP_n:=(B_n,E_n^B)$, where 
\[
E^B_n:= \{\{\pi,\pi r^B_i\}: \pi\in B_n,i\in[n]\},
\] is called the \textit{burnt pancake graph}. 

Both $P_n$ and $BP_n$ are Cayley graphs of $S_n$ and $B_n$, respectively, since the groups $S_n$ and $B_n$ are generated by $R$ and $R^B$, respectively. Therefore, both $P_n$ and $BP_n$ are \textit{vertex transitive} graphs; that is, given any two vertices $u,v$ in a vertex transitive graph, there exists a graph isomorphism $f$ such that $v=f(u)$.

We will use $P_{n-1}(q)$ ($BP_{n-1}(q)$, respectively) to denote the subgraph of $P_n$ ($BP_n$, respectively) induced by the subset of $S_n$ ($B_n$, respectively) of all permutations that end with $q$, with $q\in[n]$ ($[\pm n]$, respectively) in one-line notation. Furthermore, for $1 < k < n$, we use $P_{k-1}(p)$ ($BP_{k-1}(p)$, respectively) to denote the subgraph of $P_{n-1}(n)$ ($BP_{n-1}(n)$, respectively) whose vertices are the set of all $\pi \in S_{n}$ with \begin{displaymath}\pi = \pi_1\, \pi_2\, \cdots \,\pi_{k-1}\, p \,(k+1)\, (k+2)\, \cdots\, n,\end{displaymath} where $p\in[k]$ and $\pi_i \in [k] \setminus \{p\}$ or, respectively, $\pi \in B_{n}$ with \begin{displaymath}\pi = [\pi_1\, \pi_2\, \cdots \,\pi_{k-1}\, p\, (k+1)\, (k+2)\, \cdots \,n],\end{displaymath} where $p \in [\pm k]$ and $\pi_i \in [\pm k] \setminus \{p\}$. The edges of $P_{k}(p)$ being $\{\{\pi,\pi r_i\}: \pi \in P_{k}(p), 2\leq i \leq k-1\}$ ($\{\{\pi,\pi r_i\}: \pi \in BP_{k}(p), i \in [k-1]\}$, respectively). One readily notices that each $P_k(p)$ ($BP_k(p)$, respectively) is isomorphic to $P_{k-1}$ ($BP_{k-1}$, respectively).

A key result that we will use is the following classification of the 6-, 7-, and 8-cycles in $P_n,n\geq4$. We will refer to each cycle $C$ in $P_n$ ($BP_n$, respectively) by listing the edges that form $C$ consecutively, and we label each edge $\{\pi,\pi r_i\}$ ($\{\pi,\pi r^B_i\}$, respectively), with $\pi\in S_n,r_i\in R$ (with $\pi\in B_n,r_i^B\in R^B$, respectively) by $r_i$ ($r_i^B$, respectively). Since there are multiple ways to refer to a cycle, we choose a canonical form for every cycle. We say that a cycle $C$ is in \emph{canonical form} if $C=r_{i_1}\cdots r_{i_{\ell}}$ ($C=r^B_{i_1}\cdots r^B_{i_{\ell}}$, if in $BP_n$) and the sequence $(i_1,\ldots,i_{\ell})$ is lexicographically maximal among all sequences corresponding to indices of prefix reversals that would also traverse $C$. For example, $r_3 r_2 r_3 r_2 r_3 r_2$ is in canonical form whereas $r_2 r_3 r_2 r_3 r_2 r_3$ is not since $(232323)\underset{\text{lex}}{<}(323232)$. We are now ready to spell out the cycle classifications that are used in the proofs of our main results.

The following is an amalgam of results spanning three articles written by Konstantinova and Medvedev, which classifying the canonical forms of small cycles (6-,7-,8-, and 9-cycles) in $P_n$. This single theorem is actually a restatement of four separate results found in \cite[Lemma 3]{KM10}, \cite[Theorem 1]{KM10}, \cite[Theorem 1.3]{KonMed}, and \cite[Theorem 4]{KM11}.
\begin{thm}\label{t:cycleclassification} If $n\geq3$, then

there is only one canonical 6-cycle in $P_n$:
\begin{align}
    \label{P6-1}&r_3 r_2 r_3 r_2 r_3 r_2.
\end{align}
Furthermore, the 7-cycles in $P_n$ have the following canonical forms:
\begin{align}
    \label{P7-1}&r_k r_{k-1} r_k r_{k-1} r_{k-2} r_k r_2, &\quad 4 \leq k \leq n.
\end{align}
Moreover, the 8-cycles in $P_n$ have the following canonical forms: 
\begin{align}
    \label{P8-1}&r_{k}r_{j}r_{i}r_{j}r_{k}r_{k-j+i}r_{i}r_{k-j+i}, &\quad 2 \leq i < j \leq k-1, 4 \leq k \leq n,\\
    \label{P8-2}&r_{k}r_{k-1}r_{2}r_{k-1}r_{k}r_{2}r_{3}r_{2}, &\quad 4 \leq k \leq  n,\\
    \label{P8-3}&r_{k}r_{k-i}r_{k-1}r_{i}r_{k}r_{k-i}r_{k-1}r_{i}, &\quad 2 \leq i \leq k-2, 4 \leq k \leq n,\\
    \label{P8-4}&r_{k}r_{k-i+1}r_{k}r_{i}r_{k}r_{k-i}r_{k-1}r_{i-1}, &\quad 3 \leq i \leq k-2, 5 \leq k \leq n,\\
    \label{P8-5}&r_{k}r_{k-1}r_{i-1}r_{k}r_{k-i+1}r_{k-i}r_{k}r_{i}, &\quad 3 \leq i \leq k-2, 5 \leq k \leq n,\\
    \label{P8-6}&r_{k}r_{k-1}r_{k}r_{k-i}r_{k-i-1}r_{k}r_{i}r_{i+1}, &\quad 2 \leq i \leq k-3, 5 \leq k \leq n,\\
    \label{P8-7}&r_{k}r_{k-j+1}r_{k}r_{i}r_{k}r_{k-j+1}r_{k}r_{i}, &\quad 2 \leq i < j \leq k-1, 4 \leq k \leq n,\text{ and}\\
    \label{P8-8}&r_4r_3r_4r_3r_4r_3r_4r_3.
\end{align}
Additionally, the 9-cycles in $P_n$ have the following canonical forms:
\begin{align}
    \label{P9-1}&r_{k}r_{k-1}r_{i}r_{k-1}r_{k}r_{i}r_{i-1}r_{i+1}r_{2}, &\quad 3 \leq i \leq k-2, 5 \leq k \leq n,\\
    \label{P9-2}&r_{2}r_{k-i+2}r_{k}r_{i-2}r_{i-1}r_{i}r_{i-1}r_{k}r_{k-i+2}, &\quad 4 \leq i \leq k-1, 5 \leq k \leq n,\\
    \label{P9-3}&r_{k}r_{k-i}r_{k-1}r_{k-j+i-1}r_{k-j}r_{k}r_{j-i+1}r_{j}r_{i}, &\quad 2 \leq i < j \leq k-2, 5 \leq k \leq n,\\
    \label{P9-4}&r_{k}r_{k-1}r_{i}r_{i-1}r_{k-1}r_{k}r_{i}r_{i+1}r_{2}, &\quad 3 \leq i \leq k-2, 5 \leq k \leq n,\\
    \label{P9-5}&r_{k}r_{k-1}r_{k-2}r_{k-1}r_{k-2}r_{k}r_{3}r_{k}r_{k-2}, &\quad 4 \leq k \leq n,\\
    \label{P9-6}&r_{k}r_{k-1}r_{k-2}r_{i}r_{k}r_{2}r_{k}r_{i}r_{k-1}, &\quad 2 \leq i \leq k-3, 5 \leq k \leq n,\\
    \label{P9-7}&r_{k}r_{k-j+i}r_{k}r_{j}r_{i}r_{k}r_{k-j}r_{k-i}r_{j-i}, &\quad 2 \leq i \leq j-2, i+2 \leq j \leq k-2, 6 \leq k \leq n,\\
    \label{P9-8}&r_{k}r_{k-j+i}r_{k-j}r_{k}r_{j}r_{i}r_{k}r_{k-i}r_{j-i}, &\quad 2 \leq i \leq j-2, i+2 \leq j \leq k-2, 6 \leq k \leq n,\\
    \label{P9-9}&r_{k}r_{k-j+i}r_{k-j+1}r_{k}r_{j}r_{i}r_{k}r_{k-i+1}r_{j-i+1}, &\quad 2 \leq i < j \leq k-1, 4 \leq k \leq n,\\
    \label{P9-10}&r_{k}r_{k-1}r_{k}r_{k-1}r_{k}r_{k-1}r_{k-3}r_{k}r_{3}, &\quad 5 \leq k \leq n.
\end{align}
\end{thm}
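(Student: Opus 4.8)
The plan is to prove the classification by the hierarchical decomposition that underlies the four cited results, organized by cycle length and, within each length, by the largest index $k$ of a prefix reversal appearing in the cycle. The starting observation is that a closed walk labeled $r_{i_1} r_{i_2} \cdots r_{i_\ell}$ traces a genuine $\ell$-cycle in $P_n$ precisely when the product $r_{i_1} \cdots r_{i_\ell}$ is the identity, all partial products $e, r_{i_1}, r_{i_1} r_{i_2}, \ldots$ are pairwise distinct vertices, and consecutive indices differ cyclically (this last condition because each $r_i$ is an involution, so any repetition $i_j = i_{j+1}$ would backtrack along a single edge). Thus classifying $\ell$-cycles amounts to classifying the length-$\ell$ identity relations over the generating set $R$ that avoid backtracking and contain no shorter closed subwalk.

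First I would record the key structural fact: if $i < k$ then $r_i$ fixes the entries in positions $k, k+1, \ldots, n$, whereas $r_k$ is the only generator among $\{r_2, \ldots, r_k\}$ that alters the entry in position $k$. Consequently, if $k = \max\{i_1, \ldots, i_\ell\}$, the edges labeled $r_k$ cut the cycle into arcs, each confined to a single copy $P_{k-1}(p) \cong P_{k-1}$. A single $r_k$-edge cannot be undone by the remaining generators, so the number $t$ of $r_k$-edges satisfies $t \geq 2$ (and, as \eqref{P6-1} and \eqref{P7-1} show, $t$ need not be even). For $\ell \leq 9$ this forces $t$ into a short list, and for each value of $t$ I would enumerate the admissible arc patterns inside the $P_{k-1}(p)$ copies, using the small cases (notably $P_3$, which is itself the hexagon \eqref{P6-1}, together with $P_4$, $P_5$, and $P_6$, verified directly) as the base of an induction on $k$.

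The bulk of the work is the per-length case analysis. For the $6$-cycle the arc lengths force $k=3$ and the lone form \eqref{P6-1}; the $7$-cycle analysis leaves only the family \eqref{P7-1}. The $8$- and $9$-cycle cases are where the combinatorics proliferate: after fixing $t$ and the distribution of the remaining $8-t$ or $9-t$ steps among the arcs, I would solve the resulting word equations in $S_k$ to determine how composite indices such as $k-j+i$ and $k-i$ must occur for the product to telescope back to the identity, and then impose vertex-distinctness to discard degenerate solutions. For each surviving cycle I would rotate and reflect its index sequence and select the lexicographically maximal representative, checking that it matches one of \eqref{P8-1}--\eqref{P8-8} or \eqref{P9-1}--\eqref{P9-10} together with the stated parameter ranges; for instance, the bounds $2 \leq i < j \leq k-1$ in \eqref{P8-1} arise exactly from demanding that all four outer reversals be legitimate and distinct, while the thresholds $k \geq 5$ in \eqref{P8-4} and $k \geq 6$ in \eqref{P9-7} record the smallest $k$ for which the pattern is nondegenerate.

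The main obstacle I anticipate is exhaustiveness rather than correctness: checking that a given word is a cycle is routine, but proving that no admissible arc pattern is overlooked—and that two superficially different words do not describe the same cycle under rotation or reflection—is bookkeeping-heavy. Since this exhaustiveness has already been carried out in \cite{KM10, KonMed, KM11}, the cleanest route for the present statement is to invoke those four results directly and merely translate each into the canonical-form convention of Section~\ref{sec:preliminaries}, normalizing the base point and orientation to the lexicographically maximal index sequence, after which the consistency of the parameter ranges across the amalgamated forms is a finite verification.
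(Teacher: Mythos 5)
Your proposal is correct and ultimately takes the same route as the paper: the paper gives no independent proof of this theorem, presenting it explicitly as a restatement and amalgamation of the four cited Konstantinova--Medvedev results, which is exactly the route you settle on in your final paragraph (invoke the four results and normalize to the lexicographically maximal canonical form). Your preliminary sketch of the underlying hierarchical decomposition --- cutting each cycle at its $r_k$-edges into arcs confined to copies of $P_{k-1}$, with at least two such edges --- is sound and faithfully reflects how the cited classifications are themselves proved, but as you correctly note, the exhaustive case analysis resides in those works rather than in this paper.
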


In the same spirit as the classification of the cycles in $P_n$, the authors proved the following theorem classifying the 8-cycles in $BP_n$, for $n\geq2$.

\begin{thm}[Theorem 4.1 in \cite{BBP2018}]\label{t:burntpancakeclassification}
If $n\geq2$ then each 8-cycle in $BP_n$ has one of the following canonical forms:

\begin{align}
    \label{BP-8-1}
    &r_{k}r_{j}r_{i}r_{j}r_{k}r_{k-j+i}r_{i}r_{k-j+i}, & 1 \leq i < j \leq k-1, 3 \leq k \leq n,\\
    \label{BP-8-2}
    &r_{k}r_{j}r_{k}r_{i}r_{k}r_{j}r_{k}r_{i}, & 2 \leq i,j \leq k-2, i+j \leq k, 4 \leq k \leq n,\\
    \label{BP-8-3}
    &r_{k}r_{i}r_{k}r_{1}r_{k}r_{i}r_{k}r_{1}, &2 \leq i \leq k-1,  3 \leq k \leq n,\text{ and}\\
    \label{BP-8-4}
    &r_{k}r_{1}r_{k}r_{1}r_{k}r_{1}r_{k}r_{1} & 2 \leq k \leq n.
\end{align}
\end{thm}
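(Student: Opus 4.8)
The plan is to translate the cycle-enumeration problem into a word problem in $B_n$ and then organize everything around the largest prefix reversal occurring in a cycle. By vertex transitivity of $BP_n$ I may assume every $8$-cycle passes through the identity, so that an $8$-cycle corresponds to a word $r^B_{i_1}\cdots r^B_{i_8}$ that multiplies to the identity, has no two consecutive letters equal (otherwise the walk backtracks), and visits eight distinct vertices. Let $k=\max_j i_j$ be the largest index appearing. The structural fact I would exploit is that the $k$-th entry of a signed permutation is fixed by every $r^B_i$ with $i<k$ and is altered only by $r^B_k$; equivalently, the letters $r^B_i$ with $i<k$ move inside a single copy $BP_{k-1}(q)$, while $r^B_k$ is the unique letter crossing between these copies.

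First I would pin down how often $r^B_k$ occurs. A single occurrence cannot be cancelled by the remaining low-index letters, since those fix the $k$-th coordinate while $r^B_k$ changes it; hence $r^B_k$ occurs at least twice. Because no two equal letters are adjacent in a simple cycle of length $8$, it occurs at most four times. Writing a hypothetical three-occurrence word as $r^B_k A\, r^B_k B\, r^B_k C$ with $A,B,C$ nonempty low-index words of total length $5$ and checking the few admissible length splits shows that no simple $8$-cycle arises, which rules out three occurrences. Thus $r^B_k$ appears exactly two or four times, and I would split the proof accordingly.

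In the four-occurrence case the word is cyclically $r^B_k r^B_a r^B_k r^B_b r^B_k r^B_c r^B_k r^B_d$ with $a,b,c,d<k$, because the four remaining letters must fill the four cyclic gaps one apiece. Imposing the relation that this product is the identity and that the eight vertices are distinct, I would show the closure condition forces the pattern to be $2$-periodic, i.e. of the form $r^B_k r^B_a r^B_k r^B_b r^B_k r^B_a r^B_k r^B_b$; according as none, one, or both of $a,b$ equal $1$ this produces \eqref{BP-8-2} (with the extra constraint $i+j\le k$ ensuring the walk closes into a simple $8$-cycle rather than repeating a vertex), \eqref{BP-8-3}, and \eqref{BP-8-4}. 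In the two-occurrence case the word is $r^B_k A\, r^B_k B$ with $A,B$ nonempty low-index words of total length $6$; the relation $r^B_k A\, r^B_k = B^{-1}$ says that conjugating $A$ by $r^B_k$ lands back in the low-index subgroup, i.e. $A$ lies in $\langle r^B_1,\dots,r^B_{k-1}\rangle\cap(r^B_k\langle r^B_1,\dots,r^B_{k-1}\rangle r^B_k)$. This intersection is small enough to force $A$ (and hence $B$) to be a palindrome of length $3$, and tracking the index shift induced by conjugation by $r^B_k$ turns $r^B_j r^B_i r^B_j$ into $r^B_{k-j+i}r^B_i r^B_{k-j+i}$, yielding exactly family \eqref{BP-8-1} with range $1\le i<j\le k-1$ once non-simple candidates are discarded.

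Finally I would put each family into the lexicographically maximal canonical form, determine the precise parameter ranges, and verify two separation conditions: that within each range the walk is genuinely a simple $8$-cycle (no shorter relation is hidden, so no vertex repeats), and that the four families are pairwise disjoint as sets of cycles. The main obstacle throughout is the sign bookkeeping: because each $r^B_i$ negates as well as reverses, the conjugation relation in the two-occurrence case and the closure computations in the four-occurrence case require careful tracking of signs, and it is precisely these signs that both enlarge the range of \eqref{BP-8-1} to include $i=1$ and generate the purely burnt families \eqref{BP-8-3} and \eqref{BP-8-4}, which have no unsigned analogue. A useful check I would run in parallel is the sign-forgetting homomorphism $B_n\to S_n$ sending $r^B_i\mapsto r_i$ for $i\ge2$ and $r^B_1\mapsto 1$, under which each burnt $8$-cycle projects to a closed walk in $P_n$ that must be consistent with the classification in Theorem~\ref{t:cycleclassification}.
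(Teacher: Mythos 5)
A preliminary caveat: the paper you were given does not actually prove this theorem --- it is quoted verbatim from \cite{BBP2018} --- so the only proof available here for comparison is the paper's own proof of the analogous 9-cycle classification (Theorem~\ref{t:canonical9cyclesB}), which displays the method of the cited source: vertex transitivity, the decomposition of $BP_k$ into $2k$ copies of $BP_{k-1}$, a partition of the cycle's vertices among the copies visited with every part of size at least two, and explicit window-notation tracking via Lemmas~\ref{l:4.5}, \ref{lem:31} and~\ref{lem:32}. Your plan is the algebraic mirror image of that method: counting occurrences of $r^B_k$ is the same as counting crossings between copies, so your two-, three-, and four-occurrence cases correspond exactly to the vertex partitions $\{(6+2),(5+3),(4+4)\}$, $\{(4+2+2),(3+3+2)\}$, and $(2+2+2+2)$. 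The skeleton is therefore sound, and executed in full it would give an equivalent (arguably cleaner, since coincidences among non-adjacent copies never need separate discussion) proof.

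Two genuine gaps remain, however. First, in the two-occurrence case your stated justification would fail: the intersection $\langle r^B_1,\dots,r^B_{k-1}\rangle\cap r^B_k\langle r^B_1,\dots,r^B_{k-1}\rangle r^B_k$ is not small --- it is the joint stabilizer of the first and last entries, a full copy of $B_{k-2}$. What actually forces the palindrome is the pointwise criterion this membership encodes, namely that any $A$ in the intersection satisfies $A(1)=1$, combined with the word-length bound: a short sign computation shows that no backtracking-free word of length one or two in $r^B_1,\dots,r^B_{k-1}$ fixes the first entry, and that among length-three words only $r^B_jr^B_ir^B_j$ with $i<j$ does. (This computation is precisely where the signs matter: in the unsigned group the non-palindrome $r_{k-i}r_{k-1}r_i$ also fixes the first entry, which is why $P_n$ has the extra family~(\ref{P8-3}) while $BP_n$ does not.) Second, and more seriously, the two claims that carry essentially all of the case analysis are asserted rather than argued: that three occurrences of $r^B_k$ admit no simple $8$-cycle, and that four occurrences force the $2$-periodic pattern together with $i+j\le k$. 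Neither is routine. The corresponding three-copy cases in the paper's 9-cycle proof --- partitions $(5+2+2)$, $(4+3+2)$, $(3+3+3)$ --- occupy most of that proof, and two of them \emph{do} produce cycles, so three-crossing configurations cannot be dismissed without the detailed tracking; and in the four-occurrence case, while $r^B_kr^B_jr^B_k$ (reverse and negate the last $j$ entries) commutes with $r^B_i$ whenever $i+j\le k$, so that the $2$-periodic words close, ruling out every non-$2$-periodic pattern and every overlapping $2$-periodic pattern (for instance $(r^B_3r^B_2)^4\neq e$ in $B_3$) is the computational heart of the theorem. As written, your proposal delegates exactly the parts of the proof where the work lies.
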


Our question is a very natural one: How many pancake stacks require $k$ flips to be sorted? Equivalently, how many permutations in $S_n$ require composition with $k$ prefix reversals to be sorted? Naturally, we will think of permutations instead of pancake stacks for convenience in notation. In this light, we define the \emph{pancake distance} between two permutations as follows.

\begin{defn}
Given $\pi_1,\pi_2\in S_n$, we write $d(\pi_1,\pi_2)=k$ if $\pi_2=\pi_1 r_{i_1}\cdots r_{i_k}$ for some $r_{i_1},\ldots,r_{i_k}\in R$, and $k$ is minimal. Namely, if $\pi_2=\pi_1 r_{j_1}r_{j_2}\cdots r_{j_{k'}}$ then $k\leq k'$. We call $d(\cdot,\cdot)$, the pancake distance. 

By the same token, if given $\sigma_1,\sigma_2\in B_n$ such that $\sigma_1=\sigma_2r^B_{i_1}r^B_{i_2}\cdots r^B_{i_{\ell}}$ and $\ell$ is minimal, we say that the \textit{burnt pancake distance} between $\sigma_1$ and $\sigma_2$ is $\ell$ and write $d^B(\sigma_1,\sigma_2)=\ell$.
\end{defn}

If $k\geq1$, we denote by $R_k(n)$ the number of permutations in $S_n$ that require $k$ flips to be sorted, that is, $R_k(n)=|\{\pi\in S_n: d(e,\pi)=k\}|$. Similarly, we use $R_k^B(n)$ to denote the cardinality of the set $|\{\pi\in B_n: d^B(e,\pi)=k\}|$

One easily sees that $R_0(n)=1$ (corresponding to the identity permutation that is already sorted) and that $R_1(n)=n-1$, as the only stacks that can be sorted with one flip are the prefix-reversal permutations. For $k=2,3$ the cycle structure of $P_n$ allows us to conclude that 
\[
R_2(n)=(n-1)(n-2)\text{ and }R_3(n)=(n-1)(n-2)^2-1\text { if }n\geq3.
\] 
Indeed, the smallest cycle in $P_n$ is a 6-cycle and \textit{there is only one} such a cycle, namely, $r_3 r_2 r_3 r_2 r_3 r_2$ (see~\cite[Theorem 1.1-1.2]{KonMed}). Hence, since there are $n-1$ prefix reversals, $R_2(n)=(n-1)(n-2)$ and $R_3(n)=(n-1)(n-2)^2-1$. The first non-trivial computation is $R_4(n)$. 

We have computed values of $R_k(n)$ and $R^B_k(n)$ for several instances of $n,k$ utilizing a system with Dual Xeon CPUs and 256GB of RAM for the largest computations. We have summarized the values found in Table~\ref{tab:sn} and Table~\ref{tab:bn} and remark that due to computing limitations, several entries are still unknown. However, we were able to compute enough values to be able to offer some conjectures, which we present in Section~\ref{s:conclusion}. We are happy to share our code upon request. 

We are now ready to prove the first main result of this paper, an explicit description of $R_4(n)$. 

\section{Permutations requiring four flips to be sorted}\label{sec:sn}

The approach we use to obtain the number of pancake stacks that require four flips is the following: we will use the principle of inclusion-exclusion (PIE) using a family of sets $A_i$ and we will use the classification of 7- and 8-cycles in $P_n$ when obtaining the cardinality of the intersections of said sets. More formally, our aim is to obtain the cardinality of the set $A^4:=\{\pi\in S_n: d(e,\pi)=4\}.$ We furthermore let $A_i\subseteq A^4$, for $0\leq i\leq 4$, be the following sets.

\begin{enumerate}
    \item If $0\leq i\leq 3$,  \begin{displaymath}A_i := \{\pi=r_{j_1}r_{j_2}r_{j_3}r_{j_4} \in A^4: j_{i+1}=n,j_k\neq n\text{ for }k<i+1\}.\end{displaymath}
    \item If $i=4$, \begin{displaymath}A_4 := \{\pi=r_{j_1}r_{j_2}r_{j_3}r_{j_4} \in A^4:  j_k\neq n \text{ for all }k\in[4]\}.\end{displaymath}
\end{enumerate}

In other words, if one were to think of a path between $e$ and $\pi\in A^4$ in $P_n$, $A_i$ would contain all the paths that have $i+1$ vertices inside $P_{n-1}(n)$. It follows that 
\begin{equation}\label{eq:PIE}
R_4(n)=|A^4|=\left|\bigcup_{i=0}^n A_i\right|.
\end{equation}
We will utilize PIE to compute the cardinality of the union $\bigcup_{i=0}^n A_i$. 

In the proof of Theorem~\ref{t:snk=4}, we will need to determine substrings of larger strings. It will be more clear in the exposition if we establish the following convention.

\begin{defn}
Let $r_{i_1}r_{i_2}\cdots r_{i_m}$ represent a cycle in $P_n$ or $BP_n$. We say that a string $s$ is a \emph{continuous substring} of $r_{i_1}r_{i_2}\cdots r_{i_m}$ if $s$ or its reversal can be written in the form $r'_{j_1}r'_{j_2}\cdots r'_{j_{\ell}}$ where $j_{k+1}=j_k+1\pmod m$ for $1\leq k\leq \ell-1$.
\end{defn}

\begin{figure}
\begin{center}
\begin{tikzpicture}[scale=1,every node/.style={scale=1}]
		\node (e) at (-1.5,3.5) {$1234$};
			
		\node (2) at (-2.5,2.5) {$3214$};
		\node (1) at (-0.5,2.5) {$2134$};
		\node (3) at (1.5,3.5) {$4321$};
			
		\node (32) at (2.5,-2.5) {$4123$};
		\node (12) at (-2.5,1.5) {$2314$};
		\node (21) at (-0.5,1.5) {$3124$};
		\node (31) at (-0.5,-2.5) {$4312$};
		\node (13) at (0.5,2.5) {$3421$};
		\node (23) at (2.5,2.5) {$2341$};

		\node (232) at (1.5,-3.5) {$2143$};						
		\node (132) at (2.5,-1.5) {$1423$};
		\node (312) at (-2.5,-1.5) {$4132$};
		\node (212) at (-1.5,0.5) {$1324$};
		\node (321) at (0.5,-1.5) {$4213$};
		\node (231) at (-0.5,-1.5) {$1342$};
		\node (131) at (-1.5,-3.5) {$3412$};
		\node (213) at (0.5,1.5) {$2431$};
		\node (313) at (0.5,-2.5) {$1243$};
		\node (123) at (2.5,1.5) {$3241$};
		\node (323) at (-2.5,-2.5) {$1432$};
			
		\node (1321) at (1.5,-0.5) {$2413$};
		\node (1231) at (-1.5,-0.5) {$3142$};
		\node (1213) at (1.5,0.5) {$4231$};
			
		\draw[line width=2.pt,red](e)--(1) (2)--(12) (3)--(13) (31)--(131) (21)--(212) (32)--(132) (23)--(123) (231)--(1231) (321)--(1321) (213)--(1213);
		\draw[line width=2.pt,blue] (e)--(2) (1)--(21) (3)--(23) (31)--(231) (12)--(212) (32)--(232) (13)--(213) (312)--(1231) (132)--(1321) (123)--(1213);
		\draw[line width=2.pt,purple!50!black] (e)--(3) (2)--(32) (21)--(321) (12)--(312) (23)--(323) (212)--(1213);
		\draw[line width=2.pt,purple!50!black,domain=-72:72] plot ({0.25+2*cos(\x)},{2.5*sin(\x)}) plot ({-0.25+2*cos(\x+180)},{2.5*sin(\x+180)});
		
		\draw[line width=2.pt] (232) edge[red] (313) 
		          (232) edge[purple!50!black] (131) 
		          (132) edge[purple!50!black] (123) 
		          (321) edge[blue] (313) 
		          (231) edge[purple!50!black] (213)
		          (312) edge[red] (323)
		          (131) edge[blue] (323)
		          (1321) edge[purple!50!black] (1231);	
		          
		\draw[line width=2.pt] (e) edge[preaction={draw,yellow,-,double=yellow,double distance=2\pgflinewidth,},purple!50!black] (3)
		          (3) edge[preaction={draw,yellow,-,double=yellow,double distance=2\pgflinewidth,},blue] (23)
		          (23) edge[preaction={draw,yellow,-,double=yellow,double distance=2\pgflinewidth,},red] (123)
		          (123) edge[preaction={draw,yellow,-,double=yellow,double distance=2\pgflinewidth,},purple!50!black] (132)
		          (132) edge[preaction={draw,yellow,-,double=yellow,double distance=2\pgflinewidth,},red] (32)
		          (32) edge[preaction={draw,yellow,-,double=yellow,double distance=2\pgflinewidth,},blue] (232)
		          (232) edge[preaction={draw,yellow,-,double=yellow,double distance=2\pgflinewidth,},purple!50!black] (131)
		          (131) edge[preaction={draw,yellow,-,double=yellow,double distance=2\pgflinewidth,},blue] (323)
		          (323) edge[preaction={draw,yellow,-,double=yellow,double distance=2\pgflinewidth,},red] (312)
		          (312) edge[preaction={draw,yellow,-,double=yellow,double distance=2\pgflinewidth,},purple!50!black] (12)
		          (12) edge[preaction={draw,yellow,-,double=yellow,double distance=2\pgflinewidth,},red] (2)
		          (2) edge[preaction={draw,yellow,-,double=yellow,double distance=2\pgflinewidth,},blue] (e);
	\end{tikzpicture}	
\caption{Pancake graph $P_4$. The different colors indicate the different pancake generators. The 12-cycle $r_4r_3r_2r_4r_2r_3r_4r_3r_2r_4r_2r_3$ is highlighted.}
\label{f:network}
\end{center}
\end{figure}
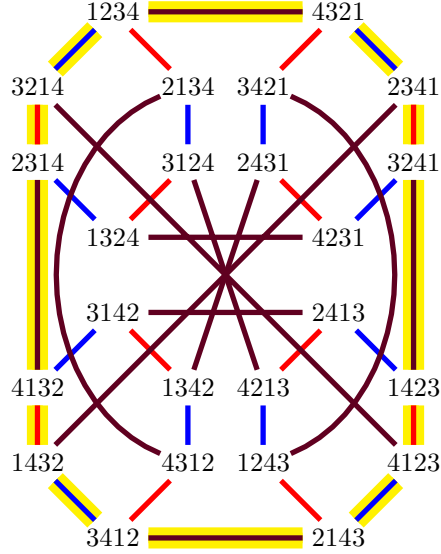

For example, consider the cycle $r_4r_3r_2r_4r_2r_3r_4r_3r_2r_4r_2r_3$ (highlighted in Figure~\ref{f:network}.) Then $r_2r_4r_2r_3r_4$ is a continuous substring as is $r_2r_3r_4r_3$.

Our intention with this definition is to emphasize that if $r_{i_1}r_{i_2}\cdots r_{i_m}$ is a cycle, then $r_{i_m}$ and $r_{i_1}$ should be considered consecutive edges.

\begin{observation}\label{ob:S1} 

 \begin{enumerate}
 \item[(a)] There are six continuous substrings of the form $r_n r_a r_b r_c$ in the only canonical form for a 7-cycle containing $r_n$, $r_n r_{n-1} r_n r_{n-1} r_{n-2} r_n r_2$. These are indicated by an arrow on top of the longer string (the directions indicate how to read the substring): 
 \begin{align*}
 \overrightarrow{r_n r_{n-1} r_n r_{n-1}} r_{n-2} r_n r_2,\qquad& r_n r_{n-1} \overrightarrow{r_n r_{n-1} r_{n-2} r_n}r_2,\\ 
 \overrightarrow{r_n r_{n-1}} r_n r_{n-1} r_{n-2} \overline{r_n r_2},\qquad& \overline{r_n r_{n-1} r_n }r_{n-1} r_{n-2} r_n \overleftarrow{r_2},\\
 r_n r_{n-1} \overleftarrow{ r_n r_{n-1} r_{n-2} r_n} r_2,\text{ and  }& \overline{r_n} r_{n-1} r_n r_{n-1}  \overleftarrow{r_{n-2} r_n r_2}.
 \end{align*}
 We let $Q_0$ denote the set of these continuous substrings. That is,
 \begin{align*}
 Q_0 := \{
 &r_n r_{n-1} r_n r_{n-1},
 r_n r_{n-1} r_{n-2} r_n,\\ 
 &r_{n} r_2 r_n r_{n-1},
 r_n r_{n-1} r_n r_2, \\
 &r_n r_{n-2}r_{n-1} r_n,
 r_{n} r_2 r_n r_{n-2} 
 \}.    
 \end{align*}
 
 \item[(b)] There are six continuous substrings of the form $r_a r_n r_b r_c$ in $r_n r_{n-1} r_n r_{n-1} r_{n-2} r_n r_2$, the only canonical form for a 7-cycle containing $r_n$. These are indicated by an arrow on top of the longer string (the directions indicate how to read the substring): 
 \begin{align*}
 r_n \overrightarrow{r_{n-1} r_n r_{n-1} r_{n-2}} r_n r_2,\qquad& \overrightarrow{r_n} r_{n-1} r_n r_{n-1} \overline{r_{n-2} r_n r_2},\\ 
 \overrightarrow{r_n r_{n-1} r_n} r_{n-1} r_{n-2} r_n \overline{r_2},\qquad& \overleftarrow{r_n r_{n-1} r_n r_{n-1}} r_{n-2} r_n r_2,\\ 
 r_n r_{n-1} r_n \overleftarrow{r_{n-1} r_{n-2} r_n r_2},\text{ and  }& \overline{r_n r_{n-1}} r_n r_{n-1} r_{n-2} \overleftarrow{r_n r_2}.
 \end{align*}
 We let $Q_1$ denote the set of these continuous substrings. That is,
 
 \begin{align*}
 Q_1:= \{
 &r_{n-1} r_n r_{n-1} r_{n-2}, 
 r_{n-2} r_n r_2 r_n, \\
 &r_2 r_n r_{n-1} r_n,
 r_{n-1} r_n r_{n-1} r_n, \\ 
 &r_2 r_n r_{n-2} r_{n-1}, 
 r_{n-1} r_n r_2 r_n
 \}.    
 \end{align*}

 \item[(c)]  There are two continuous substrings of the form $r_a r_b r_n r_c$ with $a \neq n$ in $r_n r_{n-1} r_n r_{n-1} r_{n-2} r_n r_2$. These are shown by an arrow over the top of the canonical form of the 7-cycles (the directions indicate how to read the substring): $r_n r_{n-1} r_n \overrightarrow{r_{n-1} r_{n-2} r_n r_2}$ and $r_n \overleftarrow{r_{n-1} r_n r_{n-1} r_{n-2}} r_n r_2$. We let $Q_2$ denote the set of these continuous substrings. That is,
\[Q_2 := \{r_{n-1} r_{n-2} r_n r_2, r_{n-2} r_{n-1} r_n r_{n-1}\}.\]

\item[(d)] Since there is only once 6-cycle, $r_2r_3r_2r_3r_2r_3$, then $r_2r_3r_2r_n=r_3r_2r_3r_n$. We define \begin{displaymath}Q_3 :=\{r_2 r_3 r_2 r_n\}.\end{displaymath} 
 \end{enumerate}
\end{observation}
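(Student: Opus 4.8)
The plan is to treat parts (a)--(c) as a single finite enumeration driven by the cycle classification, and to settle (d) by a one-line group-relation argument. First I would invoke Theorem~\ref{t:cycleclassification}: among the canonical $7$-cycles \eqref{P7-1}, namely $r_k r_{k-1} r_k r_{k-1} r_{k-2} r_k r_2$ with $4 \le k \le n$, the only one whose edge labels include $r_n$ is the one with $k=n$, since $k$ is the largest index occurring. Hence the unique canonical $7$-cycle containing $r_n$ is $C = r_n r_{n-1} r_n r_{n-1} r_{n-2} r_n r_2$, so every continuous substring contemplated in (a)--(c) is a continuous substring of this single cyclic word.

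Next I would regard $C$ as a cyclic word of length $7$ on positions $1,\dots,7$, recording that $r_n$ occurs exactly at positions $1,3,6$, that $r_{n-1}$ occurs at $2,4$, and that $r_{n-2}$ and $r_2$ occur once each at $5$ and $7$. By the definition of \emph{continuous substring}, every length-$4$ continuous substring is one of the seven forward arcs (windows read with indices increasing modulo $7$) or the reversal of such an arc. I would write out these seven forward arcs and their seven reversals explicitly and then sort the resulting strings by the position of the \emph{first} occurrence of $r_n$: those beginning with $r_n$ produce (a) and the set $Q_0$; those with $r_n$ in the second coordinate produce (b) and $Q_1$; and those with $r_n$ in the third coordinate but not the first produce (c) and $Q_2$. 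A short count then matches the stated totals: a forward arc begins with $r_n$ precisely when it starts at a position in $\{1,3,6\}$, and a reversed arc begins with $r_n$ precisely when the underlying forward arc \emph{ends} with $r_n$, i.e.\ starts at a position in $\{3,5,7\}$; this gives three forward and three reversed arcs for (a), and an entirely analogous tally gives three and three for (b) and one and one for (c). Comparing the explicit strings against $Q_0,Q_1,Q_2$ confirms both that the lists are exhaustive and that all six, six, and two strings are distinct.

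For part (d), I would use that by Theorem~\ref{t:cycleclassification} the only $6$-cycle is \eqref{P6-1}, which as a product of generators encodes the relation $r_2 r_3 r_2 r_3 r_2 r_3 = e$, i.e.\ $(r_2 r_3)^3 = e$, in $S_n$. Since $r_2$ and $r_3$ are involutions, this rearranges to $r_2 r_3 r_2 = (r_3 r_2 r_3)^{-1} = r_3 r_2 r_3$; right-multiplying by $r_n$ yields $r_2 r_3 r_2 r_n = r_3 r_2 r_3 r_n$, which is exactly the asserted identity, so $Q_3$ is well defined.

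The main difficulty here is not conceptual but one of bookkeeping: one must read the cyclic word in both directions and be scrupulous about the modular wrap-around (so that, for instance, the pair $r_n r_2$ straddling positions $6,7$ and the pair $r_2 r_n$ straddling positions $7,1$ are both counted) in order to guarantee that the enumeration is complete and that no forward arc silently coincides with a reversed arc and deflates a count. Once the three occurrences of $r_n$ are located and the bidirectional reading is set up carefully, the verification against $Q_0,Q_1,Q_2$ is purely mechanical.
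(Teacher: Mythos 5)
Your proposal is correct and follows essentially the same route as the paper: the paper's own justification is precisely the bidirectional reading of the cyclic word $r_n r_{n-1} r_n r_{n-1} r_{n-2} r_n r_2$ (indicated by the arrows, including the wrap-around substrings), classified by the position of $r_n$, together with the same involution/6-cycle relation $r_2 r_3 r_2 = r_3 r_2 r_3$ for part (d). Your position-by-position tally ($r_n$ at positions $1,3,6$, forward and reversed arcs counted separately) is just a slightly more formalized bookkeeping of the identical enumeration, and it reproduces $Q_0$, $Q_1$, $Q_2$, and $Q_3$ exactly.
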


The sets $Q_0, Q_1, Q_2,$ and $Q_3$ are important to identify since they are those length 4 paths in $P_n$ for which there is a ``short cut,'' of length 3, in $Q_0, Q_1,$ and $Q_2$. While $Q_3$ represents a length 4 path that potentially may be double counted since it has another representation. Thus these paths will need to be accounted for in the cardinalities of the sets $A_i$.

\begin{thm}\label{t:snk=4} If $n\geq4$, then
\[
R_4(n)=\frac{1}{2} (2 n^4 - 15 n^3 + 29 n^2 + 6 n - 34).
\]
\end{thm}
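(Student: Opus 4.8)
The plan is to compute $R_4(n)=|A^4|$ via the inclusion--exclusion formula~\eqref{eq:PIE} applied to the cover $A^4=\bigcup_{i=0}^4 A_i$, reading each $A_i$ as the set of distance-$4$ permutations that admit a length-$4$ geodesic whose first occurrence of $r_n$ lies in position $i+1$ (with $A_4$ collecting those having a geodesic that avoids $r_n$ entirely). Because a single permutation may carry several geodesics of different types, the $A_i$ genuinely overlap, so PIE is the correct instrument; the overlaps are precisely where the cycle classifications of Theorem~\ref{t:cycleclassification} enter. Thus the computation splits into three tasks: the single-set counts $|A_i|$, the intersection counts $|A_i\cap A_j|$ (and higher), and the final assembly.

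To evaluate a single $|A_i|$ I would first count the reduced words $r_{j_1}r_{j_2}r_{j_3}r_{j_4}$ with consecutive indices distinct (so that no cancellation $r_i^2=1$ collapses the length) and with the first $r_n$ forced into position $i+1$; this is an elementary split of the total count $(n-1)(n-2)^{\,3}$ according to where the forced $r_n$ sits. Not every such reduced word is a geodesic, however: a length-$4$ word that equals, as a group element, a strictly shorter word reaches a vertex at distance $\le 3$ and must be discarded. By Theorem~\ref{t:cycleclassification} the only relations of length at most $7$ are the unique $6$-cycle and the $7$-cycle $r_nr_{n-1}r_nr_{n-1}r_{n-2}r_nr_2$, so the offending windows are exactly the continuous substrings collected in $Q_0,Q_1,Q_2$ (the $7$-cycle arcs, each a length-$4$ path equal to a length-$3$ path, hence a shortcut to distance $3$), while $Q_3$ records the single $6$-cycle coincidence $r_2r_3r_2=r_3r_2r_3$ that forces the double representation $r_2r_3r_2r_n=r_3r_2r_3r_n$ and hence a potential double count. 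Subtracting these from the raw counts yields the type-$i$ geodesic contributions; for $A_4$ the words using no $r_n$ live in $P_{n-1}(n)\cong P_{n-1}$ and are handled by the smaller instance of the problem.

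The crux is the pairwise (and higher) intersections. A permutation lying in $A_i\cap A_j$ with $i\neq j$ carries two distinct length-$4$ geodesics from $e$; when these are internally disjoint their union is a closed reduced walk of length $8$, that is, an $8$-cycle through $e$, and such cycles are completely classified by the forms \eqref{P8-1}--\eqref{P8-8}. I would therefore run through each canonical $8$-cycle form, read off the position of the first $r_n$ in each of its two length-$4$ arcs as rooted at $e$, and thereby assign every such cycle to the correct intersection $A_i\cap A_j$, tallying how many instances each parametrized family contributes for a given $n$. The very same analysis, restricted to two arcs of the \emph{same} type, measures the within-type multiplicity that keeps $|A_i|$ below its geodesic-word count, so the $8$-cycle bookkeeping must be carried out once and then distributed consistently between the single-set counts and the intersection terms.

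Finally I would substitute all $|A_i|$ and $|A_i\cap A_j|$ into~\eqref{eq:PIE}, with the higher-order intersections (three or more coincident geodesics) either vanishing or being controlled by the same $8$-cycle data, and simplify; collecting terms should produce $R_4(n)=\tfrac12(2n^4-15n^3+29n^2+6n-34)$, after a separate sanity check at the boundary $n=4$, where several parameter ranges in Theorem~\ref{t:cycleclassification} degenerate. I expect the main obstacle to be exactly this $8$-cycle accounting: for each of the eight canonical forms and each admissible triple $(i,j,k)$ one must determine precisely which windows begin with $r_n$ and in which position, avoid double-subtracting the non-geodesic windows already removed through $Q_0,\dots,Q_3$, and honor the small-$n$ constraints on the parameters. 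Getting the inclusion--exclusion signs and the per-form multiplicities right, rather than any single arithmetic step, is where the genuine care is required.
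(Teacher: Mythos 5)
Your proposal follows essentially the same route as the paper's proof: the same PIE decomposition over $A_0,\dots,A_4$ with the raw word counts split by the position of the first $r_n$, the same use of the $6$- and $7$-cycle data ($Q_0,\dots,Q_3$) to discard shortcuts and double representations, the same reduction of the intersection terms (and of the within-set double counts, e.g.\ in $|A_1|$) to the canonical $8$-cycle forms of Theorem~\ref{t:cycleclassification}, and the same recursive treatment of $A_4$ via $P_{n-1}$ with the base case $n=4$ closing the recurrence. The plan is sound and matches the paper's argument in structure and in all its key points.
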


\begin{proof} We will use PIE with the sets $A_i$, for $0\leq i\leq 4$ and (\ref{eq:PIE}). We verify each of the cardinalities, one by one. To obtain the relevant cardinality of intersections of the sets, we will use Theorem~\ref{t:cycleclassification} since some permutations that can be sorted in four flips can also be sorted in three, and even if said permutations require four flips to be sorted, there might be more than one way of doing so. We explain each of the cases in detail in the proof. To obtain a recurrence for $R_k(n)$, we will assume that $n>4$. 
\begin{description}
\item[$|A_4|$.] Due to the recursive structure of $P_n$, it follows that $|A_4| = R_4(n-1)$.
\item[$|A_3|$.] Any permutation in $A_3$ will be of the form $r_a r_b r_c r_n$ where $2 \leq a,b,c \leq n-1, a \neq b,$ and $b\neq c$. We can choose $a$ in $n-2$ ways and $b,c$ in $n-3$ ways. By Observation~\ref{ob:S1}(d), $r_2r_3r_2r_n=r_3r_2r_3r_n$ so to avoid double counting we remove the string in $Q_3$ from the set of strings of the form $r_a r_b r_c r_n$ to get $A_3$. Hence, $|A_3|=(n-2)(n-3)^2-1$.
\item[$|A_2|$.] Any permutation in $A_2$ will be of the form $r_a r_b r_n r_c$ where $2 \leq a,b,c \leq n-1$ and $a\neq b$. We can choose $a,c$ in $n-2$ ways each and $b$ in $n-3$ ways, so there are $(n-2)^2(n-3)$ strings of this form. All of these strings represent permutations that can be sorted in four flips, but some of these can be also be sorted in three flips. So we need to exclude those permutations that can be sorted in three flips to get $A_2$. Notice that any permutation that can be sorted in four flips and also in three flips will be part of a 7-cycle in $P_n$. By Observation~\ref{ob:S1}(c), there are two strings of this form, those in $Q_2$, that lead to the same permutation. Therefore, $|A_2| = (n-2)^2(n-3) - 2$.
\item[$|A_1|$.] Any permutation in $A_1$ will be of the form $r_a r_n r_b r_c$ where $2 \leq a,b \leq n-1$, $2 \leq c \leq n$, and $b\neq c$. We can choose $a, b, c$ in $n-2$ ways each, and so there are $(n-2)^3$ strings of this form. All of these strings represent permutations that can be sorted in four flips but some of these can be also be sorted in three flips. Notice that any permutation that can be sorted in four flips and also in three flips will be part of a 7-cycle in $P_n$. So if we can find $r_a r_n r_b r_c$ as a continuous substring of $r_n r_{n-1} r_n r_{n-1} r_{n-2} r_n r_2$ (the only canonical form for a 7-cycle containing $r_n$), then the permutation obtained from $r_a r_n r_b r_c$ can be sorted in three flips, and should not be counted in $A_1$. By Observation~\ref{ob:S1}(b), there are six strings that should not be counted in $A_1$, those in $Q_1$. Furthermore, unlike in the previous case, there are some 8-cycles to consider: The two strings of the form $r_a r_n r_b r_c$ and $r_z r_n r_y r_x$ (with $2\leq x\leq n, 2\leq y,z\leq n-1$) would represent the same permutation, if they are part of an 8-cycle of the form $r_a r_n r_b r_c r_x r_y r_n r_z$. Comparing with the canonical forms of the 8-cycles in Theorem~\ref{t:cycleclassification}. First, there are $2(n-4)$ cycles of the form~(\ref{P8-5}),
\begin{align*}
&r_{i-1} r_n r_{n-i+1} r_{n-i} \ r_n r_i r_n r_{n-1}, \quad \text{for } 3 \leq i \leq n-2, \text{ and}\\
&r_{n-i} r_n r_{i} r_{n} \ r_{n-1} r_{i-1} r_n r_{n-i+1}, \quad \text{for } 3 \leq i \leq n-2.
\end{align*}
Second, there are $n-4$ cycles of the form~(\ref{P8-6}): $r_{n-i-1} r_n r_{i} r_{i+1} \ r_{n} r_{n-1} r_n r_{n-i}$  with $2 \leq i \leq n-3$. For a total of $3(n-4)$ such strings that are double counted of the form $r_ar_nr_br_c$. Therefore, $|A_1| = (n-2)^3-3(n-4)-6$.
\item[$|A_0|$.] Any permutation in $A_0$ will be of the form $r_n r_a r_b r_c$ where $2 \leq a \leq n-1$, $2 \leq b,c \leq n$ and $b\neq c$. We can choose $a, b, c$ in $n-2$ ways each, so there are $(n-2)^3$ strings that follow this form. The two strings $r_n r_2 r_3 r_2 $ and $r_n r_3 r_2 r_3 $ lead to the same permutation because of the only 6-cycle $r_3r_2r_3r_2r_3r_2$ in $P_n$ (so $r_3r_2r_3=r_2r_3r_2$) so we will exclude the string $r_n r_2 r_3 r_2$ to avoid double counting. As before, we need to account for strings that yield permutations that are part of 7-cycles, since these can be sorted in three flips.  Let $\pi=r_n r_a r_b r_c$ denote a permutation that can be sorted in three flips, then $\pi$ must form part of a 7-cycle. 
By Observation~\ref{ob:S1}(a), there are six strings, those in $Q_0$, that should not be counted as part of $A_0$. In this case, no 8-cycle leads to double counting as if there exists $r_n r_x r_y r_z $ with $2\leq x\leq n-1$ and $2\leq y,z\leq n$ with $r_nr_ar_br_c=r_nr_xr_yr_z$, then $r_n r_a r_b r_c r_z r_y r_x r_n$ must be an 8-cycle, which is not possible by observing the canonical forms for 8-cycles in $P_n$ given in Theorem~\ref{t:cycleclassification}. Therefore, $|A_0| = (n-2)^3 -7$.

\item[$|A_0 \cap A_1|$.] If $\pi\in A_0\cap A_1$, then there must be an 8-cycle of the form $r_x r_n r_y r_z r_a r_b r_c r_n$ with $2 \leq a,b,z \leq n$, $2 \leq c,x,y \leq n-1$, $y \neq z$, $a \neq z$, $a \neq b$, and $c \neq b$ where $r_x r_n r_y r_z \not\in Q_1$ and $r_n r_c r_b r_a \notin Q_0$. Since in this cycle the two $r_n$s are separated by one reversal, no such 8-cycles exist of the form (\ref{P8-1}), (\ref{P8-2}), nor (\ref{P8-3}). 

By comparing with (\ref{P8-4}), we see that the form
\begin{align*}
&r_{n-i+1} r_n r_i r_k \ r_{n-i} r_{n-1} r_{i-1} r_n, 
\end{align*}
with $ 3 \leq i \leq n-2$, contributes $n-4$ to $|A_0 \cap A_1|$.

By comparing with (\ref{P8-5}) we see that the forms
\begin{align*}
&r_i r_n r_{n-1} r_{i-1} \ r_n r_{n-i+1} r_{n-i} r_n, \quad \text{for } 3 \leq i \leq n-2,\text{ and }\\
&r_{i} r_n r_{n-i} r_{n-i+1} \ r_n r_{i-1} r_{n-1} r_n, \quad \text{for } 3 \leq i \leq n-2,
\end{align*}
contribute $2(n-4)$ to $|A_0 \cap A_1|$.

By comparing with (\ref{P8-6}) we see that the form
\begin{align*}
&r_{n-1} r_n r_{n-i} r_{n-i-1} \ r_n r_i r_{i+1} r_n,
\end{align*}
with $2 \leq i \leq n-3$, contributes $n-4$ to $|A_0 \cap A_1|$.

By comparing with (\ref{P8-7}) we get that the following cycles
\begin{align*}
&r_{n-j+1} r_n r_i r_n \ r_{n-j+1} r_n r_i r_n,
\end{align*}
with $2 \leq i < j\leq n-1$, contribute $\frac{1}{2}(n-2)(n-3)$ to $|A_0 \cap A_1|$. However, when $i=2$ and $j=3$ the cycle is $r_{n-2}r_nr_2r_nr_{n-2}r_nr_2r_n$. So $r_{n-2}r_nr_2r_n=r_nr_2r_nr_{n-2}$, but the left-hand-side string is in $Q_1$ and the right-hand-side string is in $Q_0$. Thus this cycle should be excluded, since both strings actually represent the same permutation $r_{n-1}r_nr_{n-1}$. Therefore, $|A_0 \cap A_1| = \frac{1}{2}(n-2)(n-3) + 4(n-4) -1 = \frac{1}{2}(n^2 + 3n - 28)$.

\item [$|A_0 \cap A_2|$.] If $\pi \in A_0 \cap A_2$, then there must be an 8-cycle of the form $r_x r_y r_n r_z r_a r_b r_c r_n$ with $2 \leq a,b,x \leq n$, $2 \leq c,y,z \leq n-1$, $x \neq y$, $z \neq a$, $a \neq b$, and $b \neq c$ where $r_x r_y r_n r_z \notin Q_2$ and $r_n r_c r_b r_a \notin Q_0$. 
Since in this cycle a pair of $r_n$s are separated by two reversals, no such cycle of the form (\ref{P8-1}), (\ref{P8-2}), (\ref{P8-3}), (\ref{P8-4}), nor (\ref{P8-7}) exists.

Upon comparison with (\ref{P8-5}) we see that the forms
\begin{align*}
&r_{n-1}r_{i-1}r_{n}r_{n-i+1}\ r_{n-i}r_{n}r_{i}r_{n}, \quad \text{for } 3 \leq i \leq n-2,\\
&r_{n-i+1}r_{n-i}r_{n}r_{i}\ r_{n}r_{n-1}r_{i-1}r_{n}, \quad \text{for } 3 \leq i \leq n-2,\\
&r_{n-i}r_{n-i+1}r_{n}r_{i-1}\ r_{n-1}r_{n}r_{i}r_{n}, \quad \text{for } 3 \leq i \leq n-2, \text{ and }\\
&r_{i-1}r_{n-1}r_{n}r_{i}\ r_{n}r_{n-i}r_{n-i+1}r_{n}, \quad \text{for } 3 \leq i \leq n-2,\\
\end{align*}
contribute $4(n-4)$ to $|A_0 \cap A_2|$.

Furthermore, upon comparing with (\ref{P8-6}), we see that
\begin{align*}
&r_{n-i}r_{n-i-1}r_{n}r_{i}\ r_{i+1}r_{n}r_{n-1}r_{n}, \quad \text{for } 2 \leq i \leq n-3, \text{ and }\\
&r_{i}r_{i+1}r_{n}r_{n-1}\ r_{n}r_{n-i}r_{n-i-1}r_{n}, \quad \text{for } 2 \leq i \leq n-3,
\end{align*}
contribute $2(n-4)$ to $|A_0 \cap A_2|$. 

Hence $|A_0 \cap A_2| = 6(n-4)$.

\item[$|A_0 \cap A_3|$.] If $\pi\in A_0\cap A_3$, then there must be an 8-cycle of the form $r_x r_y r_z r_n r_a r_b r_c r_n$ with $2 \leq a,c,x,y,z \leq n-1$, $2 \leq b \leq n$, $x \neq y$, $y \neq z$, $a \neq b$, and $b \neq c$ where $r_x r_y r_z r_n \notin Q_1$ and $r_n r_c r_b r_a \notin Q_0$. 
Since in this cycle a pair of $r_n$s are separated by three reversals, no such cycle of the form (\ref{P8-5}), (\ref{P8-6}), nor (\ref{P8-7}) exists.

Let us compare with form (\ref{P8-1}), we see that the form
\begin{align*}
    &r_{j}r_{i}r_{j}r_{n}\ r_{n-j+i}r_{i}r_{n-j+i}r_{n},
\end{align*} 
with $2 \leq i <j \leq n-1$, contributes $\frac{1}{2}(n-2)(n-3)$ to $|A_0\cap A_3|$. 

When we compare with form (\ref{P8-2}),
    $r_{2}r_{3}r_{2}r_{n}\ r_{n-1}r_{n}r_{i}r_{n}$
actually does not contribute to $|A_0\cap A_3|$ since $r_{2}r_{3}r_{2}r_{n}\in Q_3$.

We now compare with (\ref{P8-3}),
\begin{align*}
    r_{n-i}r_{n-1}r_{i}r_{n}r_{n-i}r_{n-1}r_{i}r_{n},
\end{align*}
with $ 2 \leq i \leq n-2$. Each value of $i$ gives a valid cycle (neither of the paths are in $Q_0 \ nor \ Q_3$).

We now compare with (\ref{P8-4}),
\begin{align*}
    &r_{n-i}r_{n-1}r_{i-1}r_{n}r_{n-i+1}r_{n}r_{i}r_{n}, \quad \text{for } 3 \leq i \leq n-2.
\end{align*}
This form contributes $n-4$ to $|A_0 \cap A_3|$. 

Hence, \begin{displaymath}|A_0 \cap A_3| = \frac{1}{2}(n-2)(n-3) + (n-4) +(n-3) = \frac{1}{2}(n^2 - n - 8). \end{displaymath}

\item[$|A_0 \cap A_4|$.] If a permutation $\pi \in A_0 \cap A_4$, then there must be an 8-cycle of the form $r_w r_x r_y r_z r_a r_b r_c r_n$, where $2 \leq w,x,y,z \leq n-1$ and $r_n r_c r_b r_a \notin Q_0$. No such 8-cycle exist since 
there are no cycles with four reversals between a pair of $r_{n}$ in Theorem \ref{t:cycleclassification}. So $|A_0 \cap A_4|=0$.

\item[$|A_1 \cap A_2|$.] If a permutation $\pi \in A_1 \cap A_2$, then there must be an 8-cycle of the form $r_x r_y r_n r_z r_a r_b r_n r_c$ with $2 \leq a \leq n$, $2 \leq b,c,x,y,z \leq n-1$, $x \neq y$, $z \neq a$, and $a \neq b$ where $r_x r_y r_n r_z \notin Q_2$ and $r_c r_n r_b r_a \notin Q_1$. 
Since in this cycle a pair of $r_{n}$ are separated by three reversals, no such cycle of the form (\ref{P8-5}), (\ref{P8-6}), nor (\ref{P8-7}) exists.

From form (\ref{P8-1}), we get: $r_{i}r_{j}r_{n}r_{n-j+i} \ r_{i}r_{n-j+i}r_{n}r_{j}$ with $2 \leq i < j \leq n-1$. Each value of $i$ and $j$ give a valid cycle (the paths are not in $Q_1 \ or \ Q_2$) except when $i=n-2$ and $j=n-1$. Hence the contribution of (\ref{P8-1}) is $\frac{1}{2}(n-2)(n-3)-1$. 

From form (\ref{P8-2}), we get: $r_{2}r_{n-1}r_{n}r_{2}r_{3}r_{2}r_{n}r_{n-1}$ and
$r_{3}r_{2}r_{n}r_{n-1}r_{2}r_{n-1}r_{n}r_{2}$. Thus this form contributes $2$ to $|A_1 \cap A_2|$. 

From form (\ref{P8-3}), we get: $r_{n-1}r_{i}r_{n}r_{n-i} \ r_{n-1}r_{i}r_{n}r_{n-i}$ with $2 \leq i \leq n-2$. Each value of $i$ except for $i=n-2$ gives a valid cycle (the paths are not in $Q_1 \ or \ Q_2$). Thus this form contributes $n-4$ to $|A_1 \cap A_2|$. 

From form (\ref{P8-4}) we get: $r_{n-1}r_{i-1}r_{n}r_{n-i+1} \ r_{n}r_{i}r_{n}r_{n-i}$ with $3 \leq i \leq n-2$. If we replace $i-1$ with $i$ in this form, we obtain the cycles whose first four reversals are the same as those obtained from the form (\ref{P8-3}) in the previous paragraph. So the permutations that were contributed from form (\ref{P8-4}) were already considered and would be double counted. Thus this particular form contributes nothing to $|A_1 \cap A_2|$.

Therefore, $|A_1 \cap A_2| = \frac{1}{2}(n-2)(n-3) + (n-4)+1 = \frac{1}{2}(n^2-3n)$.

\item[Other intersections of two sets.] All other intersections $A_1 \cap A_3,A_1 \cap A_4,A_2 \cap A_3,A_2 \cap A_4$, and $A_3 \cap A_4$ are empty. Otherwise, there would be an 8-cycle that could not be matched to any of the canonical forms of Theorem~\ref{t:cycleclassification}. Specifically, it can seen that the number of reversals between any two $r_n$ would be greater than or equal to four, which does not occur in any of the 8-cycles in Theorem~\ref{t:cycleclassification}.

\item[$|A_0 \cap A_1 \cap A_2|$.] If a permutation belongs to these three sets, $\pi\in A_0\cap A_1\cap A_2$, then it must be that the permutation can be written in the forms $\pi=r_n r_w r_v r_u$, $\pi=r_c r_n r_b r_a$ and $\pi=r_x r_y r_n r_z$ with $2 \leq b,c,x,y,w,z \leq n-1$ and $2 \leq a,u,v \leq n$ and none in $Q_0, Q_1,$ or $Q_2$. Thus, $P_n$ must contain 8-cycles of the following forms:
\begin{align}
    \label{3int1} &r_x r_y r_n r_z r_a r_b r_n r_c, \\
    \label{3int2} &r_x r_y r_n r_z r_u r_v r_w r_n, \text{ or}\\
    \label{3int3} &r_c r_n r_a r_b r_u r_v r_w r_n.
\end{align}

Notice that if there are cycles of the form (\ref{3int1}) and of the form (\ref{3int2}), there will be cycles that can be written in the form (\ref{3int3}) as well. Once again we will use the classification of the 8-cycles given in Theorem~\ref{t:cycleclassification} to see if there are cycles that can be written in the form (\ref{3int1}), that would share the first four reversals from (\ref{3int2}) at the same time. 

Comparing (\ref{P8-1}) with (\ref{3int1}) we obtain the cycles of the form \begin{equation}\label{eq:3int-1}
    r_i r_j r_n r_{n-j+i} \ r_i r_{n-j+i}r_n r_j,
\end{equation} with $2 \leq i < j \leq n-1$. 
With the substitutions of $n-1$ for $j$ and $i-1$ for $i$, these cycles become
\begin{equation*}
    r_{i-1} r_{n-1} r_{n} r_{i} \ r_{i-1} r_{i} r_{n} r_{n-1},
\end{equation*} with $3 \leq i \leq n-1$. Notice when comparing with the form (\ref{P8-5}),
\begin{equation*}
    r_{i-1}r_{n-1}r_n r_i r_n r_{n-i} r_{n-i+1} r_n, 
\end{equation*} with $2 \leq i \leq n-3$ the first four reversals match. So each value of $i$ in the overlap of indices, $3 \leq i \leq n-2$, corresponds to a permutation in this intersection and contributes a total of $n-4$ to $|A_0 \cap A_1 \cap A_2|$.

Now substituting $i+1$ for $j$ in (\ref{eq:3int-1}), we obtain the following form 
\begin{equation*}
r_{i} r_{i+1} r_{n} r_{n-1} r_{i} r_{n-1} r_{n} r_{i+1},
\end{equation*}
which matches (\ref{P8-6})
\begin{equation*}
r_i r_{i+1} r_n r_{n-1} r_n r_{n-i} r_{n-i-1} r_n,    
\end{equation*}
in the first four reversals, with $2 \leq i \leq n-3$. So each value $i$ in the above form contributes a permutation to the intersection for a total of $n-4$ to $|A_0 \cap A_1 \cap A_2|$. No other canonical 8-cycles would have the first four reversals from the form (\ref{3int1}), and so the total contribution to $|A_0 \cap A_1 \cap A_2|$ starting from (\ref{P8-1}) is $2(n-4)$.

Moreover, comparing (\ref{P8-2}) with (\ref{3int1}), we do not find any additional cycles that are also of the form (\ref{3int2}). 

Furthermore, comparing (\ref{P8-3}) with (\ref{3int1}), we find the form
\begin{equation}\label{eq:3int-4}
 r_{n-1}r_{i}r_{n}r_{n-i} \ r_{n-1}r_{i}r_{n}r_{n-i},   
\end{equation} with $2 \leq i \leq n-2$. By substituting $i-1$ for $i$ in (\ref{eq:3int-4}), we obtain 
\begin{equation*}
 r_{n-1}r_{i-1}r_{n}r_{n-i+1} \ r_{n-1}r_{i-1}r_{n}r_{n-i+1},
\end{equation*} with $3 \leq i \leq n-1$. The first four reversals of which match with form (\ref{P8-5})
\begin{equation*}
r_{n-1}r_{i-1}r_{n}r_{n-i+1}\ r_{n-i}r_{n}r_{i}r_{n},    
\end{equation*} with $3 \leq i \leq n-2$. So each value of $i$ in the overlap of these intervals, $3 \leq i \leq n-2$, contribute a permutation to this intersection for a total of $n-4$ added to $|A_0 \cap A_1 \cap A_2|$. No other matches with form (\ref{P8-3}) are obtained, and thus the contribution to $|A_0 \cap A_1 \cap A_2|$ from (\ref{P8-3}) is $n-4$.

Finally, we compare (\ref{P8-4}) with (\ref{3int1}) and obtain the form
\begin{equation}\label{eq:3int-6}
r_{n-1}r_{i-1}r_{n}r_{n-i+1} \ r_{n}r_{i}r_{n}r_{n-i},
\end{equation} with $3 \leq i \leq n-2$. If we compare form (\ref{eq:3int-6}) with any of the other canonical forms of 8-cycles in $P_n$, we would find a match with form (\ref{P8-3}) after substituting $i-1$ for $i$ in (\ref{eq:3int-6}). However, these cycles have already been counted. 

This completes the count of permutations found in $A_0 \cap A_1 \cap A_2$, and thus $|A_0 \cap A_1 \cap A_2|=3(n-4)$.
\end{description}

An exhaustive argument gives that any other intersections of three of the sets $A_0,\ldots,A_4$ will be empty, since no cycle matching three forms at the same time exists. Therefore if $n>4$,
\begin{align*}
    R_4(n) &= \left|\bigcup_{i=0}^4 A_i\right| \\
    &= \sum_{S \subseteq \{0,1,2,3,4\}, S \neq \emptyset} (-1)^{|S|+1} \bigcap_{i \in S} A_i\\ 
          &= |A_0| + |A_1| + |A_2| + |A_3| + |A_4| - |A_0 \cap A_1| -
            |A_0 \cap A_2| - |A_0 \cap A_3| - |A_1 \cap A_2|\\
            & \quad + |A_0 \cap A_1 \cap A_2|\\
            &=R_4(n-1) + \frac{1}{2}(8n^3 - 57n^2 + 111n - 40).
\end{align*} Using the initial condition $R_4(4)=3$ (see Table~\ref{tab:sn}) and solving the recurrence, it follows that if $n\geq4$,
\[
R_4(n)=\frac{1}{2} (2 n^4 - 15 n^3 + 29 n^2 + 6 n - 34),
\] as desired. 
\end{proof}


\begin{table}
\centering
\tiny
\scalebox{1.1}{
\begin{tabular}{|*{13}{c|}}
   \hline
    \diagbox{$n$}{$k$} & 0 & 1 & 2 & 3 & 4 & 5 & 6 & 7 & 8 & 9 & 10 & 11 \\
    \hline
    1 & 1 & 0 & 0 & 0 & 0 & 0 & 0 & 0 & 0 & 0 & 0 & 0\\
    \hline
    2 & 1 & 1 & 0 & 0 & 0 & 0 & 0 & 0 & 0 & 0 & 0 & 0\\
    \hline
    3 & 1 & 2 & 2 & 1 & 0 & 0 & 0 & 0 & 0 & 0 & 0 & 0\\
    \hline
    4 & 1 & 3 & 6 & 11 & 3 & 0 & 0 & 0 & 0 & 0 & 0 & 0\\
    \hline
    5 & 1 & 4 & 12 & 35 & 48 & 20 & 0 & 0 & 0 & 0 & 0 & 0\\
    \hline
    6 & 1 & 5 & 20 & 79 & 199 & 281 & 133 & 2 & 0 & 0 & 0 & 0\\
    \hline
    7 & 1 & 6 & 30 & 149 & 543 & 1357 & 1903 & 1016 & 35 & 0 & 0 & 0\\
    \hline
    8 & 1 & 7 & 42 & 251 & 1191 & 4281 & 10561 & 15011 & 8520 & 455 & 0 & 0\\
    \hline
    9 & 1 & 8 & 56 & 391 & 2278 & 10666 & 38015 & 93585 & 132697 & 79379 & 5804 & 0\\
    \hline
    10 & 1 & 9 & 72 & 575 & 3963 & 22825 & 106461 & 377863 & 919365 & 1309756 & 814678 & 73232\\
    \hline
    11 & 1 & 10 & 90 & 809 & 6429 & 43891 & 252737 & 1174766 & 4126515 & 9981073 & 14250471 & 9123648\\
    \hline
    12 & 1 & 11 & 110 & 1099 & 9883 & 77937 & 533397 & 3064788 & 14141929 & 49337252 & 118420043 & 169332213\\
    \hline
    13 & 1 & 12 & 132 & 1451 & 14556 & 130096 & 1030505 & 7046318 & 40309555 & 184992275 & 639783475 & 1525125357\\
    \hline
    14 & 1 & 13 & 156 & 1871 & 20703 & 206681 & 1858149 & 14721545 &  100464346 & 572626637 &&\\
    \hline
    15 & 1 & 14 & 182 & 2365 & 28603 & 315305 & 3169675 & 28528986 & 226016576 &&&\\
    \hline
    16 & 1 & 15 & 210 & 2939 & 38559 & 465001 & 5165641 & 52027677 & 468966948 &&&\\
    \hline
    17 & 1 & 16 & 240 & 3599 & 50898 & 666342 & 8102491 & 90238067 & 911274131 &&&\\
    \hline
    18 & 1 & 17 & 272 & 4351 & 65971 & 931561 & 12301949 & 150044655 & 1677036683 &&&\\
    \hline
    19 & 1 & 18 & 306 & 5201 & 84153 & 1274671 & 18161133 & 240665410 & 2947991637 &&&\\
    \hline
    20 & 1 & 19 & 342 & 6155 & 105843 & 1711585 & 26163389 &374193014 & 4982872347 &&&\\
    \hline
    21 & 1 & 20 & 380 & 7219 & 131464 & 2260236 & 36889845 &566212968 & 8141208511 &&&\\
    \hline
\end{tabular}
}
\caption{Numbers of the form $R_k(n)$ for several values of $n$ and $k$. In particular, notice that if $n\geq4$, $R_4(n)=\frac{1}{2} (2 n^4 - 15 n^3 + 29 n^2 + 6 n - 34)$. The empty entries are unknown to us.}
\label{tab:sn}
\end{table}

\section{Signed permutations requiring four flips to be sorted}\label{sec:bn}

We write $R_k^B(n)$ to denote the number of sign permutations that require $k$ burnt pancake flips to be sorted. Since the smallest cycle that can be found in $BP_n$ has length 8 (See~\cite{Compeau2011}), it follows that if $n\geq1$,
    $R_1^B(n)=n,R_2^B(n)=n(n-1)$, and $R_3^B(n)=n(n-1)^2.$
    
    So the first non-trivial case is the computation of $R_4^B(n)$. We will follow the same method used in Section~\ref{sec:sn}: We will define certain sets whose union will the set of all signed permutations requiring four flips to sort. The cardinality of this union will then equal the number of burnt pancake stacks that require four flips to be sorted. The computation of the cardinality of the union of the sets that we will define is carried out utilizing the principle of inclusion-exclusion. We will use the classification of canonical forms of the 8-cycles, due to the authors~\cite{BBP2018}, from Theorem~\ref{t:burntpancakeclassification}.

Our aim is to obtain the cardinality of the set $BA^4:=\{\pi\in B_n: d^B(e,\pi)=4\}$. We furthermore let $BA_i\subseteq BA^4$, $0\leq i\leq 4$, be the following sets.

\begin{enumerate}
    \item If $0\leq i\leq 3$,  \begin{displaymath}BA_i := \{\pi=r^B_{j_1}r^B_{j_2}r^B_{j_3}r^B_{j_4} \in BA^4: j_{i+1}=n,j_k\neq n\text{ for }k<i+1\}.\end{displaymath}
    \item If $i=4$, \begin{displaymath}BA_4 := \{\pi=r^B_{j_1}r^B_{j_2}r^B_{j_3}r^B_{j_4} \in BA^4:  j_k\neq n \text{ for all }k\in[4]\}.\end{displaymath}
\end{enumerate}

In other words, if one were to think of a path between the identity and $\pi\in BA^4$ in $BP_n$, $BA_i$ would contain all the paths that have $i+1$ vertices inside $BP_{n-1}(n)$. It follows that 
\begin{equation}\label{eq:bPIE}
R^B_4(n)=|BA^4|=\left|\bigcup_{i=0}^n BA_i\right|.
\end{equation}
Just like in the previous section, we will use PIE to compute the cardinality of the union $\bigcup_{i=0}^n BA_i$. 

In the proof of the main theorem of this section, the following two results will be used.

\begin{lem}[Lemma 4.5 in~\cite{BBP2018}]\label{l:4.5}
If $\pi_1, \pi_2 \in V (BP_{n-1}(p))$, for any $p \in [\pm n]$, with $d^B(\pi_1, \pi_2) \leq 2$, then $\pi_1r^B_n$ and $\pi_2r^B_n$ must belong to distinct copies of $BP_{n-1}$ in $BP_n$.
\end{lem}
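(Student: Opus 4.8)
The plan is to reduce the statement to an elementary claim about the first entries of $\pi_1$ and $\pi_2$ and then dispatch the two cases $d^B=1$ and $d^B=2$ by direct computation. Recall that the copies of $BP_{n-1}$ inside $BP_n$ are exactly the subgraphs $BP_{n-1}(q)$ indexed by the last entry $q\in[\pm n]$, and that right-multiplication by $r^B_n$ reverses a window and negates every entry, so that for every $\sigma\in B_n$ the last entry of $\sigma r^B_n$ equals $\underline{\sigma(1)}$. Consequently $\pi_1 r^B_n$ and $\pi_2 r^B_n$ lie in the same copy if and only if $\underline{\pi_1(1)}=\underline{\pi_2(1)}$, i.e. if and only if $\pi_1(1)=\pi_2(1)$. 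Thus it suffices to prove that, whenever $\pi_1\neq\pi_2$ both end in $p$ and $d^B(\pi_1,\pi_2)\leq 2$, their first entries differ: $\pi_1(1)\neq\pi_2(1)$. The one fact used repeatedly is that in a signed permutation distinct positions carry distinct absolute values; in particular, since $\pi_1(n)=p$ and $n\geq 2$, we have $|\pi_1(1)|\neq|p|$, so $\pi_1(1)\neq\pm p$.

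First I would treat $d^B(\pi_1,\pi_2)=1$, writing $\pi_2=\pi_1 r^B_i$. The index $i=n$ is impossible: it gives $\pi_2(n)=\underline{\pi_1(1)}$, and forcing this to equal $p$ would require $\pi_1(1)=\underline{p}$, contradicting $\pi_1(1)\neq\pm p$. Hence $i\leq n-1$, the flip fixes position $n$, and the new first entry is $\pi_2(1)=\underline{\pi_1(i)}$. If $i=1$ this is $\underline{\pi_1(1)}\neq\pi_1(1)$, while if $2\leq i\leq n-1$ then $|\pi_1(i)|\neq|\pi_1(1)|$, so again $\pi_2(1)=\underline{\pi_1(i)}\neq\pi_1(1)$.

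Next I would treat $d^B(\pi_1,\pi_2)=2$, writing $\pi_2=\pi_1 r^B_i r^B_j$ with $i\neq j$. The key preliminary step is to show that the constraint that both endpoints end in $p$ forces $i,j\leq n-1$. Indeed, a word of the shape $r^B_n r^B_j$ with $j<n$ yields $\pi_2(n)=\underline{\pi_1(1)}\neq p$, and a word of the shape $r^B_i r^B_n$ with $i<n$ yields $\pi_2(n)=\pi_1(i)$, which can equal $p=\pi_1(n)$ only if $i=n$; both possibilities are excluded. So both flips lie in $[n-1]$ and the geodesic stays inside $BP_{n-1}(p)$. Setting $\sigma=\pi_1 r^B_i$, one computes $\pi_2(1)=\underline{\sigma(j)}$, and then splits into two subcases: if $j<i$ then $\pi_2(1)=\pi_1(i+1-j)$ with $i+1-j\neq 1$ because $i\neq j$, and if $j>i$ then $\pi_2(1)=\underline{\pi_1(j)}$ with $j\geq 2$, whence $|\pi_1(j)|\neq|\pi_1(1)|$. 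In both subcases $\pi_2(1)\neq\pi_1(1)$, as required.

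I expect the main obstacle to be organizing the case analysis cleanly rather than any single hard computation: the delicate points are ruling out the appearances of $r^B_n$ inside a geodesic connecting two permutations that share their last entry, and tracking signs when a window is reversed. Both hinge on the single observation that the value $p$ occupies a unique position and that distinct positions of a signed permutation have distinct absolute values, which simultaneously forbids $r^B_n$ from preserving the last coordinate and guarantees that the first coordinate genuinely changes. Once this observation is isolated, each case reduces to reading off one entry of a reversed-and-negated window.
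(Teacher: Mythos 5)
Your proof is correct: every computation checks out, including the key identities $(\pi r^B_i)(1)=\underline{\pi(i)}$ and $(\pi r^B_n)(n)=\underline{\pi(1)}$, the exclusion of $r^B_n$ from any length-one or length-two word joining two vertices that share the last entry $p$, and the final position/absolute-value argument showing $\pi_2(1)\neq\pi_1(1)$. Note, however, that this paper does not contain a proof to compare against -- Lemma~\ref{l:4.5} is imported verbatim from~\cite{BBP2018} and used as a black box -- so what you have produced is a self-contained replacement for the cited external argument, via the natural reduction: copies of $BP_{n-1}$ are indexed by last entries, multiplication by $r^B_n$ sends first entries to (negated) last entries, so the lemma is equivalent to the statement that two distinct vertices of $BP_{n-1}(p)$ at distance at most $2$ have different first entries. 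One small reading point worth flagging: as literally stated the lemma would fail for $\pi_1=\pi_2$ (distance $0$), and you correctly restrict to $\pi_1\neq\pi_2$, which is the intended hypothesis.
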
 Moreover, the following corollary also follows. 

\begin{cor}\label{c:4vertices}
Let $C$ be an 8-cycle in $BP_n$, with $n\geq2$. If $C$ has vertices in exactly two copies $BP_{n-1}(i)$ and $BP_{n-1}(j)$ with $i,j\in[\pm n]$, then $C$ has four vertices in $BP_{n-1}(i)$ and four vertices in $BP_{n-1}(j)$.
\end{cor}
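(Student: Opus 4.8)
The plan is to analyze the 8-cycle $C$ according to how its edges are distributed among the generators. The key structural observation is that among $r_1^B,\ldots,r_n^B$, only $r_n^B$ changes the last entry of a signed permutation in window notation; every $r_k^B$ with $k\in[n-1]$ fixes the last entry and hence keeps a vertex inside its current copy $BP_{n-1}(p)$. Consequently, the edges of $C$ joining vertices of $BP_{n-1}(i)$ to vertices of $BP_{n-1}(j)$ are precisely the $r_n^B$-edges of $C$. First I would let $m$ denote the number of $r_n^B$-edges in $C$ and argue that $m\in\{2,4\}$: since $C$ returns to its starting copy and each $r_n^B$-edge toggles between the two copies, $m$ is even, and $m\geq2$ because $C$ genuinely visits both copies. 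Moreover, since $r_n^B$ is an involution, no two consecutive edges of the simple cycle $C$ can both be $r_n^B$, so each of the $m$ maximal arcs of $C$ lying inside a single copy contains at least one edge, forcing $8-m\geq m$, i.e. $m\leq4$.

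Next I would dispatch the case $m=4$ directly. Here the arcs must together account for exactly $8-4=4$ within-copy edges, and since each arc has at least one edge, every arc has exactly one edge. The edges of $C$ therefore alternate strictly between within-copy edges and $r_n^B$-edges as one traverses the cycle, which partitions the eight vertices into $v_0,v_1$ in copy $i$, then $v_2,v_3$ in copy $j$, then $v_4,v_5$ in copy $i$, then $v_6,v_7$ in copy $j$, giving four vertices in each copy.

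The substantive case is $m=2$, and this is where Lemma~\ref{l:4.5} does the work. Writing the two arcs as a path $u_0,\ldots,u_a$ inside $BP_{n-1}(i)$ and a path $w_0,\ldots,w_b$ inside $BP_{n-1}(j)$ joined by the two $r_n^B$-edges $\{u_a,w_0\}$ and $\{w_b,u_0\}$, I have $a+b=6$ and must show $a=b=3$. The two copy-$i$ endpoints satisfy $d^B(u_0,u_a)\leq a$, since the arc itself is a path of length $a$ lying inside $BP_{n-1}(i)\cong BP_{n-1}$ and hence inside $BP_n$; and both $u_0r_n^B=w_b$ and $u_ar_n^B=w_0$ lie in the same copy $BP_{n-1}(j)$. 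By the contrapositive of Lemma~\ref{l:4.5}, two vertices of a common copy whose images under $r_n^B$ lie in the same copy must be at burnt pancake distance greater than $2$; hence $d^B(u_0,u_a)\geq3$, so $a\geq3$. The symmetric argument applied to $w_0,w_b$ yields $b\geq3$, and together with $a+b=6$ this forces $a=b=3$, so each copy contains $a+1=4$ vertices. The step requiring the most care is the bound $d^B(u_0,u_a)\leq a$, on which the application of Lemma~\ref{l:4.5} hinges: I must make explicit that the arc provides a genuine path of length $a$ in $BP_n$, so that $a\leq2$ would put $u_0$ and $u_a$ within the distance hypothesis of the lemma and produce the desired contradiction.
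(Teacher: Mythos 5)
Your proof is correct and takes essentially the same approach as the paper: both arguments hinge on Lemma~\ref{l:4.5}, applied to the endpoints of the arc(s) lying in a single copy, to rule out the unbalanced splits $(6,2)$ and $(5,3)$. The only difference is one of rigor rather than method --- your explicit count of the $r_n^B$-edges and separate treatment of the four-crossing-edge case (where the $(4,4)$ split is forced automatically) is a detail the paper's proof leaves implicit by tacitly assuming the cycle decomposes into exactly two arcs.
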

\begin{proof}
By Lemma \ref{l:4.5}, if the endpoints in the $BP_{k-1}(p)$ copy (say $\pi_1$ and $\pi_2$) are at a distance of at most two, then $\pi_1r_k$ and $\pi_2r_k$ will belong to distinct copies of $BP_{k-1}$. Hence an 8-cycle cannot occur in such a way that it has six vertices in one copy of $BP_{n-1}$ and two in the other, or with five vertices in one copy and two in the other. Therefore an 8-cycle with vertices in exactly two copies of $BP_{n-1}$ can only have four vertices in each of the copies.
\end{proof}

We now state and prove the main theorem of this section, that is, an explicit formula for $R^B_4(n)$.

\begin{thm}\label{t:bsnk=4} If $n\geq1$, then
\[
    R_4^B(n) = \frac{1}{2} n(n-1)^2(2n-3).
\]
\end{thm}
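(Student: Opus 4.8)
The plan is to mirror the structure of the proof of Theorem~\ref{t:snk=4} exactly, replacing the pancake cycle classification by the burnt-pancake classification of Theorem~\ref{t:burntpancakeclassification} and exploiting the extra rigidity provided by Corollary~\ref{c:4vertices}. First I would compute the raw counts $|BA_i|$ for $0 \le i \le 4$. The key structural simplification over the unsigned case is that the shortest cycle in $BP_n$ has length $8$, so there are \emph{no} $6$- or $7$-cycles to worry about. This means that no length-$4$ word $r^B_{j_1}r^B_{j_2}r^B_{j_3}r^B_{j_4}$ can have a length-$3$ ``short cut,'' and no two such words can coincide via a short cycle. Consequently the analogues of the sets $Q_0,Q_1,Q_2,Q_3$ are all empty, and each $|BA_i|$ should be a clean product count: there are $n$ generators $r^B_1,\dots,r^B_n$ (one more than in $S_n$, since $r^B_1$ is nontrivial), and the constraint for a genuine length-$4$ geodesic is only that consecutive reversals differ (since $(r^B_i)^2=1$). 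So I expect $|BA_4|=R^B_4(n-1)$ by the recursive copy structure, and each of $|BA_0|,\dots,|BA_3|$ to be a monomial like $(n-1)^a(n-2)^b$ with \emph{no} correction terms subtracted, in sharp contrast to the $-1,-2,-6,-7$ corrections in the unsigned proof.

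Next I would compute the pairwise intersections $|BA_i \cap BA_j|$. An element of $BA_i \cap BA_j$ corresponds to a permutation expressible as a length-$4$ word starting its first $r^B_n$ in position $i+1$ and also in a second word with first $r^B_n$ in position $j+1$; concatenating one word with the reverse of the other produces an $8$-cycle in $BP_n$. The task is therefore to scan the four canonical forms (\ref{BP-8-1})--(\ref{BP-8-4}) and count, for each ordered pair $(i,j)$, how many of these $8$-cycles have their two copies of $r^B_n$ separated by the required number of intermediate reversals on each arc. Corollary~\ref{c:4vertices} is the organizing principle here: any $8$-cycle living in exactly two copies $BP_{n-1}$ splits $4$--$4$, which forces the two occurrences of $r^B_n$ to be separated by exactly three reversals on each side. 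This immediately kills many intersections. As in the unsigned case I expect $BA_i \cap BA_j=\emptyset$ whenever the two $r^B_n$'s would need fewer than three intermediate reversals between them on one arc and the cycle classification forbids it; concretely I anticipate the nonempty pairwise intersections to be exactly those matching the $4$--$4$ split, and the triple (and higher) intersections to vanish or contribute a single clean linear term.

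The main obstacle will be the bookkeeping for forms (\ref{BP-8-2}) and (\ref{BP-8-3}), which contain \emph{two} copies of $r^B_n$ at positions $1$ and $5$ with the intervening reversals symmetric; these are precisely the cycles that can be cut at two different $r^B_n$ occurrences and hence are the source of nonzero intersection counts. For each such form I must (i) rotate/reflect it into the shape $r^B_{\ast}\cdots r^B_n \cdots r^B_{\ast}\cdots r^B_n$ demanded by the defining condition of $BA_i\cap BA_j$, (ii) read off the index constraints (e.g.\ $2\le i,j\le k-2$, $i+j\le k$ for (\ref{BP-8-2})), and (iii) count lattice points subject to those inequalities while discarding any word that fails the ``consecutive reversals differ'' geodesic condition or that gets double-counted across two forms. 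Forms (\ref{BP-8-1}) and (\ref{BP-8-4}) should be comparatively easy, the former because its two $r^B_n$'s sit at positions $1$ and $5$ in a rigid pattern and the latter because it is the single word $r^B_k r^B_1 r^B_k r^B_1 r^B_k r^B_1 r^B_k r^B_1$ contributing at most a constant. Once every cardinality is in hand, inclusion--exclusion via (\ref{eq:bPIE}) yields a recurrence $R^B_4(n)=R^B_4(n-1)+p(n)$ for an explicit cubic $p(n)$; solving it with the base case read from Table~\ref{tab:bn} (for instance $R^B_4(1)=0$, consistent with the proposed formula) produces the closed form $\tfrac{1}{2}n(n-1)^2(2n-3)$, completing the proof.
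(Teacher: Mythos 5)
Your overall strategy---PIE over the sets $BA_i$, scanning the canonical forms of Theorem~\ref{t:burntpancakeclassification} for each pairwise intersection, and solving the resulting recurrence---is exactly the paper's, and your clean product counts for $|BA_0|,\dots,|BA_4|$ are correct. However, your predicted structure of the pairwise intersections contains a genuine error that would change the final polynomial. You claim the nonempty pairwise intersections are ``exactly those matching the $4$--$4$ split.'' That correctly identifies $BA_0\cap BA_3$ and $BA_1\cap BA_2$, both fed by form (\ref{BP-8-1}) (the only canonical form containing exactly two $r^B_n$'s, at positions 1 and 5; each intersection contributes $\tfrac12(n-1)(n-2)$), but it wrongly excludes $BA_0\cap BA_1$. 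That intersection is nonempty: it is realized by forms (\ref{BP-8-2}), (\ref{BP-8-3}), and (\ref{BP-8-4}), which with $k=n$ contain \emph{four} copies of $r^B_n$ (at positions 1, 3, 5, 7)---not two at positions 1 and 5, as you assert---and therefore visit four copies of $BP_{n-1}$ in a $2{+}2{+}2{+}2$ pattern rather than a $4$--$4$ one. For instance, along the cycle $r^B_n r^B_j r^B_n r^B_i r^B_n r^B_j r^B_n r^B_i$ one has $r^B_n r^B_j r^B_n r^B_i = r^B_i r^B_n r^B_j r^B_n$, giving one representation in $BA_0$ and one in $BA_1$. The paper computes $|BA_0\cap BA_1| = \tfrac12(n-2)(n-3)+2(n-2)+1 = \tfrac12 n(n-1)$; dropping this term from PIE inflates the recurrence increment by a quadratic and yields a wrong closed form. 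Your text is also internally inconsistent on this point: you separately call (\ref{BP-8-2}) and (\ref{BP-8-3}) ``the source of nonzero intersection counts,'' which is true, but only because of their four $r^B_n$'s feeding $BA_0\cap BA_1$---contradicting your $4$--$4$ claim.

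Two smaller issues. First, the clean counts $|BA_i|$ need more than the absence of $6$- and $7$-cycles: two distinct length-4 words \emph{within the same} $BA_i$ representing the same permutation would close up into an $8$-cycle, not a shorter cycle, so one must still check (as the paper does explicitly for $BA_2$) that no canonical $8$-cycle form permits such a coincidence. Second, the PIE derivation of the recurrence is only valid for $n>3$ (the index ranges in the canonical forms require it), so the recurrence should be anchored at $R^B_4(3)=18$ and the cases $n=1,2$ verified directly against the formula, as the paper does; taking $R^B_4(1)=0$ as the base presumes the recurrence holds for small $n$ where it has not been established.
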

\begin{proof} We will use PIE with the sets $BA_i$, $0\leq i\leq 4$, and (\ref{eq:bPIE}), and the canonical forms for the 8-cycles from Theorem~\ref{t:burntpancakeclassification}. We analyze each of the cardinalities individually. To derive a recurrence for $R^B_k(n)$, we will first assume that $n>3$.
\begin{description}
\item[$|BA_4|$.] Due to the recursive structure of $BP_n$, it follows that $|BA_4| = R_4(n-1)$.
\item[$|BA_3|$.] Any permutation in $BA_3$ will be of the form $r^B_ar^B_br^B_cr^B_n$ where $1 \leq a,b,c \leq n-1$, $a\neq b$, and $b\neq c$. We can choose $a$ in $n-1$ ways and $b, \ c$ in $n-2$ ways each. Since there are no 6-, nor 7-cycles in $BP_n$, each choice of $a,b,c$ will give a different signed permutation that requires four flips to be sorted. Indeed, if there were two strings $r^B_dr^B_er^B_fr^B_n$ and $r^B_xr^B_by^B_cz^B_n$ corresponding to the same permutation, $r^B_dr^B_er^B_fr^B_zr^B_yr^B_x$ would be a 6-cycle in $BP_n$, and these do not exist. Therefore, $|BA_3|=(n-1)(n-2)^2$.
\item[$|BA_2|$.] Any permutation in $BA_2$ will be of the form $r^B_ar^B_br^B_nr^B_c$ where $1 \leq a,b,c \leq n-1$, $a\neq b$. All of these signed permutations require four flips to be sorted, as there are no 6-, nor 7-cycles. Furthermore, if two strings of this form produced the same signed permutation, then there would be an 8-cycle of the form $r^B_d r^B_e r^B_n r^B_f r^B_z r^B_n r^B_y r^B_x$, which cannot be placed in any of the canonical forms given in Theorem~\ref{t:burntpancakeclassification}. Therefore, $|BA_2|=(n-1)^2(n-2)$.
\item[$|BA_1|$.] Any permutation in $BA_1$ will be of the form $r^B_a r^B_n r^B_b r^B_c$ where $1 \leq a,b \leq n-1$, $1 \leq c \leq n$, and $b\neq c$. We can choose $a, b, c$ in $n-1$. The same arguments presented in the previous cases yield that each of these strings give a different signed permutation, and so $|BA_1|=(n-1)^3$.  
\item[$|BA_0|$.] Any permutation in $BA_0$ will be of the form $r^B_n r^B_a r^B_b r^B_c$ where $1\leq a\leq n-1$, $1 \leq b,c \leq n$, and $b\neq c$. Since there are no 6-, nor 7- cycles, each of these strings will lead to a different signed permutation, and so $|BA_0|=(n-1)^3$. 
\item[$|BA_0 \cap BA_1|$.] If $\pi \in BA_0 \cap BA_1$, then there must be an 8-cycle of the form $r^B_n r^B_a r^B_b r^B_c r^B_z r^B_y r^B_n r^B_x$ where $1 \leq a,x,y \leq n-1$, $1 \leq b,c,y,z \leq n$, $a \neq b$, $b \neq c$, and $y \neq z$. We find contributions to this form from the canonical forms (\ref{BP-8-2}), (\ref{BP-8-3}), and (\ref{BP-8-4}) only. For form (\ref{BP-8-2}), we obtain the following cycles
\begin{equation}\label{eq:b1}
  r^B_{n}r^B_{j}r^B_{n}r^B_{i}r^B_{n}r^B_{j}r^B_{n}r^B_{i},
\end{equation} with $2 \leq i,j \leq n-2, i+j \leq n$. Considering the possible values of $i,j$ can take in (\ref{eq:b1}), we get $\frac{1}{2}(n-3)(n-2)$ 8-cycles.

Similarly, comparing with (\ref{BP-8-3}) we get $2(n-2)$ 8-cycles arising from all possible values of $i$ in the forms below.
\begin{align*}
    r^B_n r^B_i r^B_n r^B_1 r^B_n r^B_i r^B_n r^B_1, &\text{ with }2 \leq i \leq n-1, \text{ and}\\
    r^B_n r^B_1 r^B_n r^B_i r^B_n r^B_1 r^B_n r^B_i, &\text{ with }2 \leq i \leq n-1.
\end{align*}

Furthermore, by comparing with (\ref{BP-8-4}) we get only one 8-cycle: $r^B_n r^B_1 r^B_n r^B_1 r^B_n r^B_1 r^B_n r^B_1 $. Putting the pieces together, we have 
\begin{displaymath}|BA_0 \cap BA_1| = \frac{1}{2}(n-2)(n-3)+2(n-2)+1.\end{displaymath}

\item[$|BA_0 \cap BA_2|$.] If $\pi \in BA_0 \cap BA_2$, then there must be an 8-cycle of the form $r^B_n r^B_a r^B_b r^B_c r^B_z r^B_n r^B_y r^B_x$. If such a cycle existed, it would have five vertices in one copy of $BP_{n-1}$, which contradicts Corollary~\ref{c:4vertices}. Hence, $|BA_0 \cap BA_2|=0$.

\item[$|BA_0 \cap BA_3|$.] If $\pi \in BA_0 \cap BA_3$, then there must be an 8-cycle of the form $r^B_n r^B_a r^B_b r^B_c r^B_n r^B_z r^B_y r_x$ where $1 \leq a,c,x,y,z \leq n-1$, $1 \leq b \leq n$, $a \neq b$, $b \neq c$, $x \neq y$, and $y \neq z$. The only canonical form that can
match this is (\ref{BP-8-1}), obtaining 
\begin{equation}\label{eq:b2}
r^B_{n}r^B_{j}r^B_{i}r^B_{j}r^B_{n}r^B_{n-j+i}r^B_{i}r^B_{n-j+i},
\end{equation}
with $1\leq i<j \leq n-1$. There are $\frac{1}{2}(n-1)(n-2)$ possible values for $i,j$ in (\ref{eq:b2}), and so \begin{displaymath}|BA_0 \cap BA_3| = \frac{1}{2}(n-1)(n-2).\end{displaymath}

\item[$|BA_0 \cap BA_4|$.] Essentially the same argument as the case $BA_0 \cap BA_2$ gives that $|BA_0 \cap BA_4|=0$.

\item[$|BA_1 \cap BA_2|$.] If $\pi \in BA_1 \cap BA_2$, then there must be an 8-cycle of the form $r^B_a r^B_n r^B_b r^B_c r^B_z r^B_n r^B_y r^B_x$ where $1 \leq a,b,x,y,z \leq n-1$, $1 \leq c \leq n$, $a \neq x$, $b \neq c$, $c \neq z$, and $x \neq y$. The only canonical form from Theorem~\ref{t:burntpancakeclassification} that matches this form is (\ref{BP-8-1}), we get 
\begin{equation}\label{eq:b3}
r^B_{n-j+i}r^B_{n}r^B_{j}r^B_{i}r^B_{j}r^B_{n}r^B_{n-j+i}r^B_{i},
\end{equation}  with $1\leq i<j \leq n-1$. Considering all possible values for $i,j$ in (\ref{eq:b3}), we have 

\begin{displaymath}|BA_1 \cap BA_2| = \frac{1}{2}(n-1)(n-2).\end{displaymath}

\item[Other intersections] All other intersections $BA_1 \cap BA_3, BA_1 \cap BA_4,BA_2 \cap BA_3, BA_2 \cap BA_4$, and $BA_3 \cap BA_4$ are empty. By the same token, all the intersections of three distinct sets from $\{BA_i\}_{i=0}^4$ are empty as well. Indeed, if one of these intersections were not empty, then there would be an 8-cycle that could not be matched to any of the canonical forms of Theorem~\ref{t:burntpancakeclassification}.
\end{description} Now, using PIE, if $n>3$ it follows that
\begin{align*}
    R^B_4(n) &= \left|\bigcup_{i=0}^4BA_i\right| \\
          &= \sum_{S \subseteq \{0,1,2,3,4\}, S \neq \emptyset} (-1)^{|S|+1} \bigcap_{i \in S} BA_i\\ 
          &= |BA_0| + |BA_1| + |BA_2| + |BA_3| + |BA_4| - |BA_0 \cap BA_1| -|BA_0 \cap BA_3|\\ &\quad - |BA_1 \cap BA_2|\\
          &=R^B_4(n-1) +  \frac{1}{2}(8n^3-33n^2+45n-20).
\end{align*} After solving the recurrence relation, using the initial condition $R^B_4(3)=18$ (see Table~\ref{tab:bn}), we obtain that for $n\geq3$,
\[
R^B_4(n)=\frac{1}{2} n(n-1)^2(2n-3).
\] Upon further inspection, it turns out that the if we plug in $n=1,2$ into $\frac{1}{2} n(n-1)^2(2n-3)$ we obtain $0,1$ respectively. Since these are indeed the true values of $R^B_4(1)$ and $R^B_4(2)$, we have that for $n\geq1$, $R^B_4(n)=\frac{1}{2} n(n-1)^2(2n-3)$. This completes the proof of the theorem. 
\end{proof}

\begin{table}
\centering
\tiny
\scalebox{1.1}{
\begin{tabular}{|*{13}{c|}}
   \hline
    \diagbox{$n$}{$k$} & 0 & 1 & 2 & 3 & 4 & 5 & 6 & 7 & 8 & 9 & 10 & 11\\
    \hline
    1 & 1 & 1 & 0 & 0 & 0 & 0 & 0 & 0 & 0 & 0 & 0 & 0\\
    \hline
    2 & 1 & 2 & 2 & 2 & 1 & 0 & 0 & 0 & 0 & 0 & 0 & 0\\
    \hline
    3 & 1 & 3 & 6 & 12 & 18 & 6 & 2 & 0 & 0 & 0 & 0 & 0\\
    \hline
    4 & 1 & 4 & 12 & 36 & 90 & 124 & 96 & 18 & 3 & 0 & 0 & 0\\
    \hline
    5 & 1 & 5 & 20 & 80 & 280 & 680 & 1214 & 1127 & 389 & 40 & 4 & 0\\
    \hline
    6 & 1 & 6 & 30 & 150 & 675 & 2340 & 6604 & 12795 & 15519 & 6957 & 959 & 43\\
    \hline
    7 & 1 & 7 & 42 & 252 & 1386 & 6230 & 24024 & 71568 & 159326 & 222995 & 136301 & 21951\\
    \hline
    8 & 1 & 8 & 56 & 392 & 2548 & 14056 & 68656 & 276136 & 901970 & 2195663 & 3531887 & 2743477\\
    \hline
    9 & 1 & 9 & 72 & 576 & 4320 & 28224 & 166740 & 843822 & 3636954 & 12675375 & 33773653 & 60758618\\
    \hline
    10 & 1 & 10 & 90 & 810 & 6885 & 51960 & 359928 & 2193534 & 11738418 & 53257425 & 198586153 & \\
    \hline
    11 & 1 & 11 & 110 & 1100 & 10450 & 89430 & 710358 & 5060220 & 32328648 & 180577749 &  & \\
    \hline
    12 & 1 & 12 & 132 & 1452 & 15246 & 145860 & 1306448 & 10645866 & 79016157 &  &  & \\
    \hline
    13 & 1 & 13 & 156 & 1872 & 21528 & 227656 & 2269410 & 20812077 & 175905015 &  &  & \\
    \hline
    14 & 1 & 14 & 182 & 2366 & 29575 & 342524 & 3760484 & 38319281 & 363216425 &  &  & \\
    \hline
    15 & 1 & 15 & 210 & 2940 & 39690 & 499590 & 5988892 & 67117596 &  &  &  & \\
    \hline
    16 & 1 & 16 & 240 & 3600 & 52200 & 709520 & 9220512 & 112694400 &  &  &  & \\
    \hline
    17 & 1 & 17 & 272 & 4352 & 67456 & 984640 & 13787272 & 182483644 &  &  &  & \\
    \hline
    18 & 1 & 18 & 306 & 5202 & 85833 & 1339056 & 20097264 & 286341948 &  &  &  & \\
    \hline
    19 & 1 & 19 & 342 & 6156 & 107730 & 1788774 & 28645578 &  &  &  &  & \\
    \hline
    20 & 1 & 20 & 380 & 7220 & 133570 & 2351820 & 40025856 &  &  &  &  & \\
    \hline
    21 & 1 & 21 & 420 & 8400 & 163800 & 3048360 & 54942566 &  &  &  &  & \\
    \hline
    22 & 1 & 22 & 462 & 9702 & 198891 & 3900820 & 74223996 &  &  &  &  & \\
    \hline
    23 & 1 & 23 & 506 & 11132 & 239338 & 4934006 & 98835968 &  &  &  &  & \\
    \hline
    24 & 1 & 24 & 552 & 12696 & 285660 & 6175224 & 129896272 &  &  &  &  & \\
    \hline
    25 & 1 & 25 & 600 & 14400 & 338400 & 7654400 & 168689820 &  &  &  &  & \\
    \hline
\end{tabular}
}
\caption{Numbers of the form $R_k^B(n)$ for several values of $n$ and $k$. In particular, notice that if $n\geq1$, then $R_4^B(n)=\frac{1}{2} n(n-1)^2(2n-3)$. The empty entries are unknown to us.}
\label{tab:bn}
\end{table}

\section{Classification of the 9-cycles in the Burnt Pancake Graph}\label{sec:9cycles}

In this section, we present classification of any 9-cycle in $BP_n$, with $n\geq2$. This presentation is in the same spirit as \cite{BBP2018,KM10,KM11,KonMed,KM16} where similar forms for 6-,7-,8- and 9-cycles in the pancake graphs $P_n$ and 8-cycles in the burnt pancake graph $BP_n$ are given. We start the description of $9$-cycles in $BP_n$, with $n\geq2$, by giving some preliminary definitions, notation, and lemmas. 

In classifying the 9-cycles we will look at decomposing the window notation of the a signed permutation $\sigma \in B_n$ into substrings, $\sigma = [X Y Z]$. A convenient notation that will be employed is for a signed reversal of a substring, that is if $X=[x_1\;x_2\;\cdots\;x_{i-1}\;x_i]$, then $\overline{X}=[\uline{x_i}\;\uline{x_{i-1}}\;\cdots\;\uline{x_2}\;\uline{x_1}]$. As is customary, for a graph $G$ we shall use $V(G)$ for its set of vertices and $E(G)$ for its set of edges. In this section, we generally follow the convention of using names of signed permutations based on the last character, e.g., $\pi \in V(BP_{n-1}(p))$, $\uline{\pi} \in V(BP_{n-1}(\uline{p}))$, $\rho \in V(BP_{n-1}(q))$, etc. 

In addition to Lemma~\ref{l:4.5}, a few other lemmas from \cite{BBP2018} will be necessary in the classification of all the 9-cycles in $BP_n$, with $n\geq3$. We recall that $BP_2$ is itself an 8-cycle~\cite[Theorem 10]{Compeau2011}, so if a 9-cycle exists in $BP_n$, then $n\geq3$.
\begin{lem}[Lemma 4.2 in \cite{BBP2018}]\label{lem:31}
 If $\pi \in V(BP_{n-1}(p))$ and $\pi r^B_{n} \in V(BP_{n-1}(q))$, then $|p| \neq |q|$.
\end{lem}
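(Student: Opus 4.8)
The plan is to compute directly the window notation of $\pi r^B_n$, read off its last coordinate (which is exactly the label $q$ of the copy $BP_{n-1}(q)$ it lands in), and then compare absolute values with $p$. First I would fix the convention for how the signed prefix reversal acts: writing $\pi = [\pi_1\,\pi_2\,\cdots\,\pi_{n-1}\,p]$ so that the last entry is $\pi_n = p$, the generator $r^B_n = [\underline{n}\,\underline{n-1}\,\cdots\,\underline{1}]$ reverses the whole window of length $n$ and negates every entry. Carrying this out gives
\[
\pi r^B_n = [\,\underline{p}\;\underline{\pi_{n-1}}\;\cdots\;\underline{\pi_2}\;\underline{\pi_1}\,].
\]
The key observation is then immediate: the last entry of $\pi r^B_n$ is $\underline{\pi_1} = -\pi_1$, so the hypothesis $\pi r^B_n \in V(BP_{n-1}(q))$ forces $q = -\pi_1$, and in particular $|q| = |\pi_1|$.

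With $p$ and $q$ identified in terms of the first and last entries of $\pi$, the conclusion follows from the defining property of a signed permutation. Since $\pi \in B_n$ satisfies $\pi(-i) = -\pi(i)$, its window $[\pi_1\,\cdots\,\pi_n]$ has the property that $\{|\pi_1|,\,|\pi_2|,\,\ldots,\,|\pi_n|\}$ is exactly $[n]$; that is, the absolute values of the entries are pairwise distinct. As noted in the excerpt, a $9$-cycle (the object this lemma is ultimately applied to) can exist only when $n \geq 3$, so in particular $n \geq 2$ and the positions $1$ and $n$ are distinct. Therefore $|\pi_1| \neq |\pi_n|$, and combining with $|p| = |\pi_n|$ and $|q| = |\pi_1|$ yields $|p| \neq |q|$, as desired.

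There is essentially no deep obstacle here; the only point requiring care is getting the action of $r^B_n$ on the window correct, namely verifying that right multiplication by $r^B_n$ in the Cayley graph sends the first coordinate $\pi_1$ (negated) to the final position. I would justify this once, cleanly, from the composition convention and the involution identity $(r^B_n)^2 = \mathrm{id}$ already recorded in the preliminaries, and then the distinctness of absolute values finishes the argument immediately.
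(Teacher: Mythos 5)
Your proof is correct and is essentially the canonical argument: right multiplication by $r^B_n$ reverses and negates the entire window, so the last entry of $\pi r^B_n$ is $\underline{\pi_1}$, and since the entries of a signed permutation have pairwise distinct absolute values, $|q|=|\pi_1|\neq|\pi_n|=|p|$. Note that the present paper only cites this lemma from \cite{BBP2018} rather than reproving it, and that your detour through the existence of 9-cycles is unnecessary: the standing convention that $n\geq 2$ already makes positions $1$ and $n$ distinct.
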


Moreover, the following lemma is also used. 
\begin{lem}[Lemma 4.3 in \cite{BBP2018}]\label{lem:32}
Let $\pi,\tau\in B_n$ have the same first element $q\in[\pm n]$ in window notation. Then $d(\pi,\tau) = 3$ if and only if $\tau = \pi 
r^B_{j} r^B_{i} r^B_{j}$, $1 \leq i < j \leq n$ where $\pi = [A B C]$, $\tau=[A \overline{B} C]$, $|A|=j-i$, $|B|=i$, and $|C| \geq 0$.
\end{lem}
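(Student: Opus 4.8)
The plan is to combine one explicit computation with a single bookkeeping argument that tracks the position and sign of the first entry of $\pi$ throughout a sequence of burnt flips. First I would record the action of right multiplication by a burnt flip on window notation: if $\pi=[\pi_1\,\pi_2\,\cdots\,\pi_n]$ then $\pi r^B_m$ reverses the first $m$ entries and negates them, so an entry sitting in position $p\le m$ moves to position $m-p+1$ with its sign flipped, while an entry in position $p>m$ is left untouched. Writing $\pi=[ABC]$ with $|A|=j-i$ and $|B|=i$, a direct three-step calculation then gives
\[
\pi r^B_j r^B_i r^B_j=[A\,\overline{B}\,C],
\]
because the two outer flips of size $j$ together fix both $A$ and $C$ while reversing-and-negating the length-$i$ block $B$. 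In particular this word preserves the first entry (as $|A|=j-i\ge1$) and shows $d(\pi,\tau)\le3$, which is the easy half of the backward direction.

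For everything else I would follow the first entry $x_1$ of $\pi$ under an arbitrary product of flips. Two observations drive the whole argument: (i) the first flip always touches position $1$, hence always touches $x_1$; and (ii) $x_1$ returns to its original signed value exactly when it ends in position $1$ and has been touched an even number of times. To complete the backward direction I would rule out $d\le2$. The case $d=0$ fails because $\overline{B}\ne B$: the entries of $B$ have pairwise distinct absolute values, so the equation $B_p=-B_{i+1-p}$ cannot hold for every $p$. The case $d=1$ fails because a single flip touches $x_1$ exactly once and therefore changes the first entry. The case $d=2$ fails because requiring $x_1$ back in position $1$ with even parity forces the two flips to have equal size and hence collapse to the identity. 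Combined with $d\le3$ this yields $d(\pi,\tau)=3$.

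For the forward direction I would fix a geodesic $\tau=\pi r^B_a r^B_b r^B_c$ and apply the same tracking. Since $r^B_a$ touches $x_1$ and the total number of touches must be even, exactly one of $r^B_b,r^B_c$ also touches $x_1$. If it were $r^B_b$, then returning $x_1$ to position $1$ would force $a=b$, so the first two flips cancel and $d\le1$, contradicting geodesicity; hence it must be $r^B_c$. Then $r^B_b$ misses the position of $x_1$ (forcing $b<a$) and $r^B_c$ restores position $1$ (forcing $c=a$). Setting $j=a=c$ and $i=b$ gives $1\le i<j\le n$, so the geodesic word is $r^B_j r^B_i r^B_j$, and the opening computation identifies $\tau=[A\overline{B}C]$ with $|A|=j-i$ and $|B|=i$.

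The main obstacle is this forward case analysis: one must track "touching" carefully across the moving position of $x_1$, and must genuinely use geodesicity (no two consecutive equal flips, no hidden shortcut) to discard the degenerate branch where the middle flip is the second one touching $x_1$. The exclusion of $d=2$ in the backward direction is a smaller instance of exactly the same care, and checking $\overline{B}\ne B$ is the only place where the signed (rather than merely unsigned) structure is essential.
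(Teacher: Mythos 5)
Your proof is correct. Note that this paper does not actually prove Lemma~\ref{lem:32}: it is imported verbatim from the authors' earlier work \cite{BBP2018}, so there is no internal proof to compare against; but your argument is a valid, self-contained derivation, and it is in exactly the same elementary spirit as the arguments this paper does carry out (e.g.\ the proof of Theorem~\ref{t:canonical9cyclesB}, which likewise tracks the position and sign of distinguished entries such as $p$, $q$, $s$ through successive flips). All the key steps check out: the computation $[A\,B\,C]\,r^B_j r^B_i r^B_j = [\overline{B}\,\overline{A}\,C]\,r^B_i r^B_j = [B\,\overline{A}\,C]\,r^B_j = [A\,\overline{B}\,C]$; the exclusion of $d=0$ via distinctness of absolute values (which rules out $B_p=-B_{i+1-p}$ at every position, whether or not $i$ is odd); the exclusions of $d=1,2$ by the sign-parity/position argument; and the forward case analysis, where the branch in which the middle flip is the second touch forces the first two flips to coincide and cancel, contradicting $d(\pi,\tau)=3$, while the surviving branch forces $b<a=c$, i.e.\ the word $r^B_j r^B_i r^B_j$ with $i<j$.
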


We are now ready to state and prove the main result of this section, the classification of all the 9-cycles in $BP_n$, with $n\geq3$.

\begin{thm}\label{t:canonical9cyclesB} If $n\geq3$, then the canonical forms of 9-cycles in $BP_n$ are as follows:  
\begin{align}
    \label{9-1}
    &r^B_{k}r^B_{k-i}r^B_{k}r^B_{k-j}r^B_{k-i-j}r^B_{k}r^B_{j}r^B_{i+j}r^B_{i} &\quad 1 \leq i,j \leq k-2, \quad i+j \leq k-1, \quad 3 \leq k \leq n;\\
    \label{9-2}
    &r^B_{k}r^B_{i+j}r^B_{i}r^B_{k}r^B_{k-i}r^B_{j}r^B_{k}r^B_{k-j}r^B_{k-i-j} &\quad 1 \leq i,j \leq k-2, \quad i+j \leq k-1, \quad 3 \leq k \leq n.
\end{align}
\end{thm}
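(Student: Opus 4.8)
The strategy is to exploit the recursive copy structure of $BP_n$ exactly as in the 8-cycle classification of Theorem~\ref{t:burntpancakeclassification}: fix the largest generator $r^B_k$ appearing in a 9-cycle $C$ (after putting $C$ in canonical form, $r^B_k$ is the first edge), and analyze how $C$ distributes its vertices among the copies $BP_{k-1}(p)$. Since the length is odd (nine) and each application of $r^B_k$ moves between distinct copies by Lemma~\ref{lem:31}, the number of $r^B_k$-edges in $C$ must be even; combined with the fact that a 9-cycle confined to a single copy $BP_{k-1}$ would be a 9-cycle in a smaller burnt pancake graph, I would argue that $C$ uses $r^B_k$ exactly twice (using it four or more times would leave too few remaining edges to form paths across the copies, and Corollary~\ref{c:4vertices}-type vertex-counting forbids the degenerate splits). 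This pins down the shape of $C$ as two arcs, one of length say $\ell$ and one of length $7-\ell$ lying inside two (or possibly three) copies, joined by the two $r^B_k$ edges.

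\textbf{Key steps.} First I would establish that $C$ contains exactly two occurrences of $r^B_k$, and determine the possible lengths of the two intervening arcs — by Corollary~\ref{c:4vertices} and the analogous parity/distance bookkeeping, the two arcs should have lengths summing to $7$, with each arc being a geodesic or near-geodesic path inside a single copy. Second, for each admissible arc-length split I would invoke Lemma~\ref{lem:32} (the explicit $d=3$ characterization $\tau=\pi r^B_j r^B_i r^B_j$ with the block decomposition $[A\,B\,C]$, $\tau=[A\,\overline{B}\,C]$) to write down precisely which permutations the endpoints of each $r^B_k$-edge can be; the two endpoints in each copy must be connected by a short path whose reversal lengths are forced by the window-block sizes $|A|=j-i$, $|B|=i$. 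Third, I would translate the resulting block-reversal constraints into the two index conditions, obtaining the two families~(\ref{9-1}) and~(\ref{9-2}) with the inequalities $1\le i,j\le k-2$, $i+j\le k-1$. Finally I would verify each listed form genuinely closes up into a 9-cycle (a direct window-notation computation showing the composition of the nine signed prefix reversals returns to the identity permutation) and check that the stated forms are in fact the lexicographically maximal canonical representatives, so that no form is listed twice and none is missed.

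\textbf{The main obstacle.} The hard part will be the exhaustiveness of the case analysis: ruling out every arc-length split and every copy-distribution other than the two that yield~(\ref{9-1}) and~(\ref{9-2}). Unlike the 8-cycle case, where the even length makes the $r^B_k$-parity argument clean and Corollary~\ref{c:4vertices} immediately forces a $4{+}4$ split, the odd length $9$ forces an unbalanced split and admits a priori more configurations — including the possibility that $C$ meets three distinct copies, or that an arc is not a geodesic. I expect the bulk of the proof to be a careful elimination of these spurious cases, each requiring a distance argument via Lemma~\ref{lem:31} and Lemma~\ref{lem:32} together with the block-size arithmetic, to show that only the two canonical families survive. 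A secondary, more mechanical difficulty is confirming the canonical-form (lexicographic-maximality) claim and checking there is no overlap between the two families under the traversal symmetries of a cycle.
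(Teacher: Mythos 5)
Your reduction to ``exactly two occurrences of $r^B_k$'' is wrong, and it sinks the whole plan. The parity claim --- that the number of $r^B_k$-edges must be even because each such edge moves between distinct copies --- would be valid only if the copies of $BP_{k-1}$ formed two sides of a bipartition. They do not: $BP_k$ splits into $2k$ copies, so a cycle can leave its starting copy and return after an \emph{odd} number of crossings (copy $p$ $\to$ copy $q$ $\to$ copy $s$ $\to$ copy $p$ uses three $r^B_k$-edges). In fact, the simplest sanity check refutes your premise: both canonical forms in the statement you are proving contain \emph{three} occurrences of $r^B_k$ --- at positions $1,3,6$ in (\ref{9-1}) and positions $1,4,7$ in (\ref{9-2}). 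The paper's proof organizes the argument by the partition of the nine vertices among copies (each part $\geq 2$): the two-copy case $(5+4)$, which is the only case your framework covers, is shown to be \emph{impossible} (a length-four path joining the two crossing vertices inside one copy would force a 7-cycle, and $BP_n$ has girth 8); the actual 9-cycles arise from the three-copy partitions $(4+3+2)$ and $(3+3+3)$, which produce (\ref{9-1}) and (\ref{9-2}) respectively, and the four-copy partition $(3+2+2+2)$ is eliminated by a further case analysis on coincidences among $|p|,|q|,|s|,|t|$. Your proposal, carried out as written, would conclude that no 9-cycles exist at all.

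The salvageable part of your outline is the toolkit, not the reduction: Lemma~\ref{lem:31} (crossing edges change the absolute value of the last entry), Lemma~\ref{lem:32} (the explicit $d=3$ block-reversal characterization), and girth-8 distance arguments are exactly what the paper uses inside each partition class. If you replace your step 1 by the correct enumeration --- partitions of $9$ into parts $\geq 2$, giving $(5{+}4)$, $(5{+}2{+}2)$, $(4{+}3{+}2)$, $(3{+}3{+}3)$, $(3{+}2{+}2{+}2)$, hence two, three, or four $r^B_k$-edges --- and then run your window-notation bookkeeping within each class (tracking where the symbols $\uline{p},\uline{q},\uline{s},\uline{t}$ must sit after each crossing), you recover the paper's argument.
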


\begin{proof}
As mentioned before, $BP_k$ has a recursive structure and we can find $2k$ copies of $BP_{k-1}$ embedded into $BP_k$, with $3\leq k\leq n$. Recall that we use $BP_{k-1}(x)$, with $x\in[\pm k]$, to denote the subgraph isomorphic to $BP_{k-1}$ induced by looking at the vertices of $BP_k$ that end with the string $x\,(k+1)\,(k+2)\,\cdots\,n$. We will make use of this recursive structure to classify all 9-cycles in $BP_{n-1}(n)$, which is the copy including the identity, by considering the vertices of the 9-cycles in different copies of $BP_{k-1}$ embedded in $BP_k$. Due to the vertex transitive nature of $BP_{n}$, if there is a cycle $C$ in $BP_n$, there would be a cycle with the same labels as $C$ that includes the identity, and therefore it is enough to consider the cycles that contain the identity. 

Since each vertex in $BP_{k-1}(x)$ is connected to exactly one other vertex in some $BP_{k-1}(y)$, with $y\in[\pm k]\setminus\{x\}$, any 9-cycle will share at least two vertices with any copy of $BP_{k-1}$. We will identify a 9-cycle with a partition $(a_1+a_2+\cdots+a_m)$ of 9. That is, $a_1+a_2+\cdots+a_m=9$, where $a_i$ indicates the number of vertices in the $i${th} copy of $BP_{k-1}$ the cycle is incident upon. 
 As noted above, $a_i \geq 2$ for all $i$. Thus a 9-cycle can be formed by using two, three, or four copies of $BP_{k-1}$. Enumerating through each possible partition will exhaust all possible 9-cycles.
\begin{enumerate}
\item[CASE I :-] A cycle incident upon two copies of $BP_{k-1}$.

We know from Lemma \ref{l:4.5} that if two permutations $\pi_1$ and $\pi_2$ belong to the same copy of $BP_{k-1}$ and are at a distance of less than 3, then $\pi_1r^B_k$ and $\pi_2r^B_k$ belong to different copies of $BP_{k-1}$. Hence it is necessary that five vertices are in one copy and four vertices are in the other.

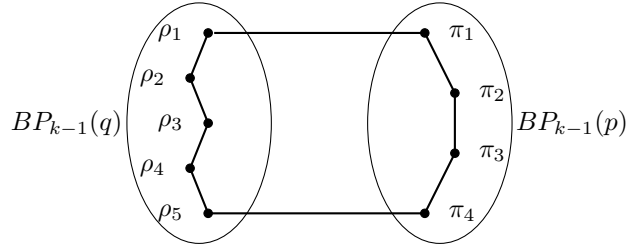
\begin{figure}
\centering
\begin{tikzpicture}[scale=0.8]
\draw (2,2) ellipse (1.2cm and 2cm);
\draw (6,2) ellipse (1.2cm and 2cm);
\draw node at (-0.2,2) {$BP_{k-1}(q)$};
\draw node at (8.2,2) {$BP_{k-1}(p)$};
\filldraw (2.15,3.50) circle (2pt) node[align=left,xshift = -0.5cm] {$\rho_{1}$};
\filldraw (1.85,2.75) circle (2pt) node[align=left,xshift = -0.5cm] {$\rho_{2}$};
\draw [thick] (2.15,3.50) -- (1.85,2.75);
\filldraw (2.15,2.0) circle (2pt) node[align=left,xshift = -0.5cm] {$\rho_{3}$};
\draw [thick] (1.85,2.75) -- (2.15,2);
\filldraw (1.85,1.25) circle (2pt) node[align=left,xshift = -0.5cm] {$\rho_{4}$};
\draw [thick] (2.15,2) -- (1.85,1.25);
\filldraw (2.15,0.5) circle (2pt) node[align=left,xshift = -0.5cm] {$\rho_{5}$};
\draw [thick] (1.85,1.25) -- (2.15,0.5);
\filldraw (5.75,0.50) circle (2pt) node[align=left,xshift = 0.5cm] {$\pi_{4}$};
\draw [thick] (2.15,0.50) -- (5.75,0.50);
\filldraw (6.25,1.50) circle (2pt) node[align=left,xshift = 0.5cm] {$\pi_{3}$};
\draw [thick] (5.75,0.50) --(6.25,1.50);
\filldraw (6.25,2.50) circle (2pt) node[align=left,xshift = 0.5cm] {$\pi_{2}$};
\draw [thick] (6.25,1.50) -- (6.25,2.50);
\filldraw (5.75,3.50) circle (2pt) node[align=left,xshift = 0.5cm] {$\pi_{1}$};
\draw [thick] (6.25,2.50) -- (5.75,3.50);
\draw [thick] (2.25,3.50) -- (5.75,3.50);
\end{tikzpicture}
\caption{A 9-cycle incident on two copies of $BP_{k-1}$ would need to be a (5+4) cycle.}
\label{f:(5+4)}
\end{figure}

Let the two copies used be $BP_{k-1}(p)$ and $BP_{k-1}(q)$. By Lemma~\ref{lem:31}, $|p| \neq |q|$. So we may track the position and sign of both $p$ and $q$ in a every permutation of the cycle. Suppose that four vertices of such a 9-cycle belong to $BP_{k-1}(p)$, and the other five vertices belong to $BP_{k-1}(q)$ (see Figure~\ref{f:(5+4)}). The four vertices of $BP_{k-1}(p)$ form a path of length three whose endpoints are adjacent to vertices from $BP_{k-1}(q)$, which means both vertices should have $\uline{q}$ in their first positions. Starting with one of these vertices in $BP_{k-1}(p)$ we have the form $\pi_{1}=[\uline{q} X p]$. By Lemma~\ref{lem:32}, we can describe the forms of the remaining vertices of $BP_{k-1}(p)$. With $1 \leq i < j \leq k-1$ we have $\pi_{2}=\pi_{1}r^B_j=[\overline{X_1} q X_2 p]$, $\pi_{3}=\pi_{2}r^B_i=[X_{12} \overline{X_{11}} q X_2 p]$, $\pi_{4}=\pi_{3}r^B_j=[\uline{q} X_{11} \overline{X_{12}} X_2 p]$  where $X=X_1X_2$, $X_1 = X_{11}X_{12}$, $|X_1|=j-1$, and $|X_{12}|=i$. Continuing in $BP_{k-1}(q)$ we see $\rho_{1} = [\uline{p} \overline{X_2} \ \overline{X_{12}} \ \overline{X_{11}} q]$ and $\rho_{5} = [\uline{p} \overline{X_2} X_{12} \overline{X_{11}} q]$. Taking $A = \uline{p}X_2$ and $B=\overline{X_{11}}q$ it is clear that $|A|,|B|,|X_{12}| \geq 1$. We need a path of length four from $\rho_5=[A X_{12} B]$ to $\rho_1=[A \overline{X_{12}} B]$. $r^B_{|A|+|X_{12}|}r^B_{|X_{12}|}r^B_{|A|+|X_{12}|}$ is a path of length three. Thus if a path of length four existed, there would be a 7-cycle in $BP_{k-1}(q)$, which is not possible since the length of the smallest cycle in $BP_n$ is eight. Hence cycles of form (5+4) are not possible.

\item[CASE II :-] A cycle incident upon three copies of $BP_{k-1}$.

There can be three possibilities for the partition of vertices (5+2+2) or (4+3+2) or (3+3+3). Let the three copies incident upon be $BP_{k-1}(p)$, $BP_{k-1}(q)$, and $BP_{k-1}(s)$. By Lemma \ref{lem:31}, it follows that $|p|\neq |q|$, $|p| \neq |s|$, and $|q| \neq |s|$. 

\begin{figure}
\centering
\begin{tikzpicture}[scale=0.8]
\draw (2,2) ellipse (1.2cm and 2cm);
\draw node at (-0.2,2) {$BP_{k-1}(p)$};
\draw [rotate around={55:(5,3.2)}](5,3.2) ellipse (0.75cm and 1.45cm);
\draw node at (7,3.5) {$BP_{k-1}(q)$};
\draw [rotate around={-55:(5,0.8)}](5,0.8) ellipse (0.75cm and 1.45cm);
\draw node at (7,0.5) {$BP_{k-1}(s)$};
\filldraw (2.15,3.50) circle (2pt) node[align=left,xshift = -0.5cm] {$\pi_{1}$};
\filldraw (1.85,2.75) circle (2pt) node[align=left,xshift = -0.5cm] {$\pi_{2}$};
\draw [thick] (2.15,3.50) -- (1.85,2.75);
\filldraw (2.15,2.0) circle (2pt) node[align=left,xshift = -0.5cm] {$\pi_{3}$};
\draw [thick] (1.85,2.75) -- (2.15,2);
\filldraw (1.85,1.25) circle (2pt) node[align=left,xshift = -0.5cm] {$\pi_{4}$};
\draw [thick] (2.15,2) -- (1.85,1.25);
\filldraw (2.15,0.5) circle (2pt) node[align=left,xshift = -0.5cm] {$\pi_{5}$};
\draw [thick] (1.85,1.25) -- (2.15,0.5);
\filldraw (4.6,0.35) circle (2pt) node[align=left,xshift = 0.45cm] {$\sigma_{2}$};
\draw [thick] (2.15,0.50) -- (4.6,0.35);
\filldraw (5.4,1.15) circle (2pt) node[align=left,xshift = 0.4cm] {$\sigma_{1}$};
\draw [thick] (4.6,0.35) -- (5.4,1.15);
\filldraw (5.4,2.85) circle (2pt) node[align=left,xshift = 0.4cm] {$\rho_{2}$};
\draw [thick] (5.4,1.15) -- (5.4,2.85);
\filldraw (4.6,3.65) circle (2pt) node[align=left,xshift = 0.45cm] {$\rho_{1}$};
\draw [thick] (5.4,2.85) -- (4.6,3.65);
\draw [thick] (4.6,3.65) -- (2.15,3.50);
\end{tikzpicture}
\caption{A 9-cycle incident upon three copies of $BP_{k-1}$ with vertex partition (5+2+2).}
\label{f:(5+2+2)}
\end{figure}
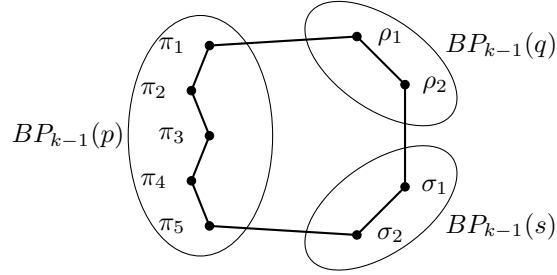

Suppose five vertices of such a 9-cycle belong to the copy $BP_{k-1}(p)$, two vertices belong to a copy $BP_{k-1}(q)$ and the other two vertices belong to a copy $BP_{k-1}(s)$ (see Figure~\ref{f:(5+2+2)}). As $\rho_{2}$ will have $\uline{s}$ in its first position and that $\pi_1$ is exactly two edges away, we see that $\pi_{1}=[\uline{q} X \uline{s} Y p]$. This gives $\rho_{1}=\pi_{1}r^B_k=[\uline{p} \overline{Y} s \overline{X} q]$, $\rho_{2}=\rho_{1}r^B_{|Y|+2}=[\uline{s} Y p \overline{X} q]$, $\sigma_{1}=\rho_{2}r^B_k=[\uline{q} X \uline{p} \overline{Y} s]$. $\sigma_{2}$ must have $\uline{p}$ in its first position, which is not possible in one edge from $\sigma_{1}$. Hence cycles of form (5+2+2) do not exist in the burnt pancake graph.

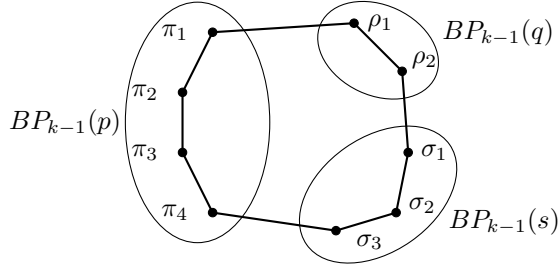
\begin{figure}
\centering
\begin{tikzpicture}[scale=0.8]
\draw (2,2) ellipse (1.2cm and 2cm);
\draw node at (-0.2,2) {$BP_{k-1}(p)$};
\draw [rotate around={65:(5,3.2)}](5,3.2) ellipse (0.75cm and 1.05cm);
\draw node at (7,3.5) {$BP_{k-1}(q)$};
\draw [rotate around={-55:(5,0.8)}](5,0.8) ellipse (0.95cm and 1.45cm);
\draw node at (7.1,0.35) {$BP_{k-1}(s)$};
\filldraw (2.25,3.50) circle (2pt) node[align=left,xshift = -0.5cm] {$\pi_{1}$};
\filldraw (1.75,2.50) circle (2pt) node[align=left,xshift = -0.5cm] {$\pi_{2}$};
\draw [thick] (2.25,3.50) -- (1.75,2.50);
\filldraw (1.75,1.50) circle (2pt) node[align=left,xshift = -0.5cm] {$\pi_{3}$};
\draw [thick] (1.75,2.50) -- (1.75,1.50);
\filldraw (2.25,0.50) circle (2pt) node[align=left,xshift = -0.5cm] {$\pi_{4}$};
\draw [thick] (1.75,1.50) -- (2.25,0.50);
\filldraw (4.3,0.2) circle (2pt) node[align=left,xshift = 0.45cm,yshift=-0.15cm] {$\sigma_{3}$};
\draw [thick] (2.25,0.50) -- (4.3,0.2);
\filldraw (5.3,0.5) circle (2pt) node[align=left,xshift = 0.35cm,yshift = 0.1cm] {$\sigma_{2}$};
\draw [thick] (4.3,0.2) -- (5.3,0.5);
\filldraw (5.5,1.5) circle (2pt) node[align=left,xshift = 0.35cm] {$\sigma_{1}$};
\draw [thick] (5.3,0.5) -- (5.5,1.5);
\filldraw (5.4,2.85) circle (2pt) node[align=left,xshift = 0.3cm,yshift = 0.15cm] {$\rho_{2}$};
\draw [thick] (5.5,1.5) -- (5.4,2.85);
\filldraw (4.6,3.65) circle (2pt) node[align=left,xshift = 0.35cm] {$\rho_{1}$};
\draw [thick] (5.4,2.85) -- (4.6,3.65);
\draw [thick] (4.6,3.65) -- (2.25,3.50);
\end{tikzpicture}
\caption{A 9-cycle incident upon three copies of $BP_{k-1}$ with vertex partition (4+3+2).}
\label{f:(4+3+2)}
\end{figure}

Suppose four vertices of such a 9-cycle belong to the copy $BP_{k-1}(p)$, two vertices belong to a copy $BP_{k-1}(q)$, and the other three vertices belong to a copy $BP_{k-1}(s)$ (see Figure~\ref{f:(4+3+2)}). As $\rho_{2}$ will have $\uline{s}$ in its first position and that $\pi_1$ is exactly two edges away, we see that $\pi_{1}=[\uline{q} X \uline{s} Y p]$. This gives $\rho_{1}=\pi_{1}r^B_k=[\uline{p} \overline{Y} s \overline{X} q]$, $\rho_{2}=\rho_{1}r^B_{|Y|+2}=[\uline{s} Y p \overline{X} q]$, $\sigma_{1}=\rho_{2}r^B_k=[\uline{q} X \uline{p} \overline{Y} s]$. Now, $\sigma_{3}$ must have $\uline{p}$ in its first position, so in the path from $\sigma_{1}$ to $\sigma_{3}$, $p$ should be involved in both the flips.
$\sigma_{2}=\sigma_{1}r^B_{|X|+|Y_2|+2}=[Y_2 p \overline{X} q \overline{Y_1} s]$ where $Y=Y_1Y_2$. This gives $\sigma_{3}=\sigma_{2}r^B_{|Y_2|+1}=[\uline{p} \overline{Y_2}\ \overline{X} q \overline{Y_1} s]$, $\pi_{4}=\sigma_{3}r^B_k=[\uline{s} Y_1 \uline{q} X Y_2 p]$. Taking $A=\uline{s}Y_1$, $B=\uline{q}X$, $C=Y_2 p$ where $|A|,|B|,|C| \geq 1$ we need to find a path of length three between $\pi_4=[ABC]$ and $\pi_1=[BAC]$. One may verify that $\pi_4r^B_{|A|}r^B_{|A|+|B|}r^B_{|B|}=\pi_1$. Taking $|X|=i-1,|Y_1|=j-1$ we get $|Y_2|=k-i-j-1 \geq 0$ and a cycle corresponding to (\ref{9-1}).

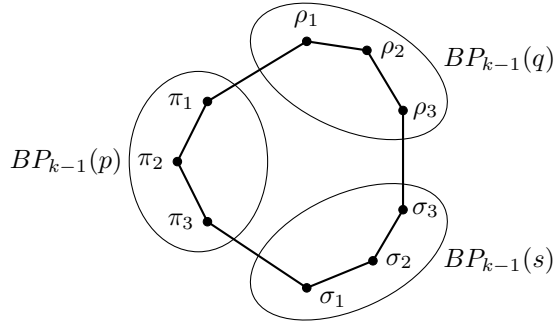
\begin{figure}
\centering
\begin{tikzpicture}[scale=0.8]
\draw (2,2) ellipse (1.15cm and 1.5cm);
\draw node at (-0.2,2) {$BP_{k-1}(p)$};
\draw [rotate around={65:(4.5,3.5)}](4.5,3.5) ellipse (0.95cm and 1.75cm);
\draw node at (7,3.7) {$BP_{k-1}(q)$};
\draw [rotate around={-65:(4.5,0.5)}](4.5,0.5) ellipse (0.95cm and 1.75cm);
\draw node at (7,0.35) {$BP_{k-1}(s)$};
\filldraw (2.15,3.0) circle (2pt) node[align=left,xshift = -0.35cm] {$\pi_{1}$};
\filldraw (1.65,2.0) circle (2pt) node[align=left,xshift = -0.35cm] {$\pi_{2}$};
\draw [thick] (2.15,3) -- (1.65,2);
\filldraw (2.15,1) circle (2pt) node[align=left,xshift = -0.35cm] {$\pi_{3}$};
\draw [thick] (1.65,2) -- (2.15,1);
\filldraw (3.8,-0.1) circle (2pt) node[align=left,xshift = 0.35cm,yshift=-0.15cm] {$\sigma_{1}$};
\draw [thick] (2.15,1) -- (3.8,-0.1);
\filldraw (4.9,0.35) circle (2pt) node[align=left,xshift = 0.35cm] {$\sigma_{2}$};
\draw [thick] (3.8,-0.1) -- (4.9,0.35);
\filldraw (5.4,1.2) circle (2pt) node[align=left,xshift = 0.3cm,yshift=-0.1] {$\sigma_{3}$};
\draw [thick] (4.9,0.35) -- (5.4,1.2);
\filldraw (5.4,2.85) circle (2pt) node[align=left,xshift = 0.3cm] {$\rho_{3}$};
\draw [thick] (5.4,1.2) -- (5.4,2.85);
\filldraw (4.8,3.85) circle (2pt) node[align=left,xshift = 0.35cm] {$\rho_{2}$};
\draw [thick] (5.4,2.85) -- (4.8,3.85);
\filldraw (3.8,4) circle (2pt) node[align=left,yshift = 0.3cm] {$\rho_{1}$};
\draw [thick] (4.8,3.85) -- (3.8,4);
\draw [thick] (3.8,4) -- (2.15,3.0);
\end{tikzpicture}
\caption{A 9-cycle incident upon three copies of $BP_{k-1}$ with vertex partition (3+3+3).}
\label{f:(3+3+3)}
\end{figure}

Suppose three vertices of such a 9-cycle belong to a copy $BP_{k-1}(p)$, three vertices belong to a copy $BP_{k-1}(q)$ and the other three vertices belong to a copy $BP_{k-1}(s)$ (see Figure~\ref{f:(3+3+3)}). The vertex in $BP_{k-1}(p)$ that is adjacent to a vertex in $BP_{k-1}(q)$, $\pi_1$, can be of the form (a) $\pi_1=[\uline{q} X s Y p]$ or (b) $\pi_1=[\uline{q} X \uline{s} Y p]$.

\begin{enumerate}
    \item Since $\pi_{1} = [\uline{q} X s Y p]$, then $\rho_{1}=\pi_{1}r^B_k=[\uline{p} \overline{Y} \uline{s} \overline{X} q]$. As $\pi_{3}$ must have $\uline{s}$ in its first position, $s$ should be involved in only one reversal in the path from $\pi_{1}$ to $\pi_{3}$. So the first reversal must not involve $s$. This gives $\pi_{2}=\pi_{1}r^B_{|X_1|+1}=[\overline{X_1} q X_2 s Y p]$ where $X=X_1X_2$, $\pi_{3}=\pi_{2}r^B_{|X|+2}=[\uline{s} \overline{X_2} \uline{q} X_1 Y p]$, and $\sigma_{1}=\pi_{3}r^B_k=[\uline{p} \overline{Y} \ \overline{X_1} q X_2 s]$.
    As $\sigma_{3}$ must have $\uline{q}$ in its first position, $q$ should be involved in one reversal in the path from $\sigma_{1}$ to $\sigma_{3}$. So the first reversal must not involve $q$. This gives two possibilities: 
    
    \begin{enumerate}
        \item $\sigma_{2}=\sigma_{1}r^B_{|Y_2|+1}[Y_2 p \overline{Y_1} \ \overline{X_1} q X_2 s ]$ where $Y=Y_1Y_2$ and $\sigma_{3}=\sigma_{2}r^B_{|Y|+|X_1|+2} = [\uline{q} X_1 Y_1 \uline{p}$ $\overline{Y_2}X_2 s]$. Following through with this possibility we get $\rho_{3}=[\uline{s} \overline{X_2} Y_2 p \overline{Y_1}\ \overline{X_1} q]$. We need a path of length two from $\rho_{3}$ to $\rho_{1}=[\uline{p} \overline{Y_2}\ \overline{Y_1} \uline{s} \overline{X_2}\ \overline{X_1} q]$. As $\rho_{1}$ has $\uline{p}$ in its first position, $p$ should be involved in only the second reversal. In order for this to be so, without having to exchange the positions of $Y_1$, it must be that $|Y_1|=0$. 
        Then we get $\rho_{3}r^B_{|X_2|+1}r^B_{|X_2|+|Y_2|+2}=\rho_{1}$. Taking $|X|=i-1$ and $|X_2|=j-1$ we get $|Y_2|=k-i-j-1 \geq 0$ and a cycle corresponding to (\ref{9-2}).
        
        \item $\sigma_{2}=\sigma_{1}r^B_{|Y|+|X_{12}|+1}=[X_{12} Y p \overline{X_{11}} q X_2 s]$ where $X_1=X_{11}X_{12}$ with $|X_{12}|\geq 1$ and $\sigma_{3}=\sigma_{2}r^B_{|Y|+|X_1|+2} = [\uline{q} X_{11} \uline{p} \overline{Y} p \overline{X_{12}} X_2 s]$. Following through with this possibility we get $\rho_{3}=[\uline{s} \overline{X_2} X_{12} Y p  \overline{X_{11}} q]$. We need a path of length two from $\rho_{3}$ to $\rho_{1}=[\uline{p} \overline{Y}\ \uline{s} \overline{X_2} \ \overline{X_{12}}\ \overline{X_{11}}q]$. As $\rho_{1}$ has $\uline{p}$ in its first position, $p$ should be involved in only the second reversal. We have $\rho_{3}r^B_{|X_2|+|X_{12}|+1}r^B_{|X_{12}}|r^B_{|X_2|+|X_{12}|+|Y|+2}=\rho_{1}$. This is a path of length three which can be reduced to length two if and only if $|X_{12}|=0$. Since $|X_{12}| \geq 1$ this possibility does not give any 9-cycle.
    \end{enumerate}
    
    \item If $\pi_{1} = [\uline{q} X \uline{s} Y p]$, then $\rho_{1}=\pi_{1}r^B_k=[\uline{p} \overline{Y} s \overline{X} q]$. As $\pi_{3}$ must have $\uline{s}$ in its first position, $\uline{s}$ should be involved in both the reversals. This gives $\pi_{2}=\pi_{1}r^B_{|X|+|Y_1|+2}=[\overline{Y_1} s \overline{X} q Y_2 p]$ where $Y=Y_1 Y_2$, $\pi_{3}=\pi_{2}r^B_{|Y_1|+1}=[\uline{s} Y_1 \overline{X} q Y_2 p]$, and $\sigma_{1}=\pi_{3}r^B_k=[\uline{p} \overline{Y_2} \uline{q} X \overline{Y_1} s]$. As $\sigma_{3}$ must have $\uline{q}$ in its first position, $q$ should be involved in both the reversals in the path from $\sigma_{1}$ to $\sigma_{3}$. This gives two possibilities:
    
    \begin{enumerate}
        \item $\sigma_{2}=\sigma_{1}r^B_{|X_1|+|Y_2|+2}=[\overline{X_1} q Y_2 p X_2 \overline{Y_1}s]$ where $X=X_1X_2$ and $\sigma_{3}=\sigma_{2}r^B_{|X_1|+1}= [\uline{q} X_1 Y_2 p X_2 \overline{Y_1} s]$. Following through with this possibility we get $\rho_{3}=\sigma_{3}r^B_k=[\uline{s} Y_1 \overline{X_2} \uline{p}$ $\overline{Y_2}$ $\overline{X_1} q]$. We need a path of length two from $\rho_{3}$ to $\rho_{1}=[\uline{p}\overline{Y_2}\ \overline{Y_1} s \overline{X_2}\ \overline{X_1}q]$. As $\rho_{1}$ has $\uline{p}\overline{Y_2}$ in its prefix, $\overline{X_2}$ with the same sign and ordering of characters, but $s$ will have the opposite sign in $\rho_{3}$, then $\uline{p}\overline{Y_2}$ and $\overline{X_2}$ must be part of both reversals but $s$ must only be part of one. 
        This is only possible if $|X_2|=0$. Then $\rho_{3}r^B_{|Y|+2}r^B_{|Y_2|+1}=\rho_{1}$. Taking $|X_1|=j-1$, $|Y_1|=i-1$ we get $|Y_2|=k-i-j-1$ and a cycle corresponding to (\ref{9-2}).
        
        \item $\sigma_{2}=\sigma_{1}r^B_{|X|+|Y_2|+|Y_{12}|+2}=[Y_{12}\overline{X}qY_2p\overline{Y_{11}}s]$ where $Y_1=Y_{11}Y_{12}$ with $|Y_{12}| \geq 1$ and $\sigma_{3}=\sigma_{2}r^B_{|X|+|Y_{12}|+1}= [\uline{q}X \overline{Y_{12}} Y_2 p\overline{Y_{11}}s]$. Following through with this possibility we get $\rho_{3}=\sigma_{3}r^B_k=[\uline{s}Y_{11}\uline{p}\overline{Y_2}Y_{12}\overline{X}q]$. We need a path of length two from $\rho_{3}$ to $\rho_{1}=[\uline{p} \overline{Y_2}\ \overline{Y_{12}}\ \overline{Y_{11}}s\overline{X}q]$. As $\rho_{1}$ has $\uline{p}$ in its first position, $p$ should be involved in both the reversals in this path. However, it can be seen that $\rho_{3}r^B_{|Y_{11}|+|Y_{12}|+|Y_2|+2}r^B_{|Y_{12}|}r^B_{|Y_{12}|+|Y_2|+1}=\rho_{1}$. This is a path of length three within $BP_{k-1}(q)$ which can be reduced to length two if and only if $|Y_{12}|=0$ but by assumption $|Y_{12}|\geq 1$. Therefore this possibility does not yield a 9-cycle. 
    \end{enumerate}
\end{enumerate}
    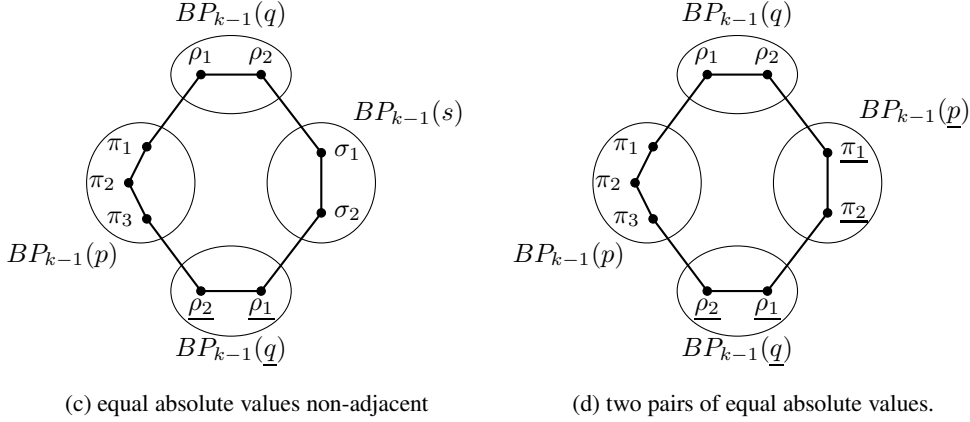
\begin{figure}
        \ContinuedFloat*
        \centering
        \begin{subfigure}{0.45\textwidth}
        \begin{tikzpicture}[scale=0.8]
        \draw (2,2) ellipse (0.9cm and 1cm);
        \draw node at (0.7,0.8) {$BP_{k-1}(p)$};
        \draw (5,2) ellipse (0.9cm and 1cm);
        \draw node at (3.5,4.8) {$BP_{k-1}(q)$};
        \draw (3.5,3.8) ellipse (1cm and 0.65cm);
        \draw node at (6.45,3.15) {$BP_{k-1}(s)$};
        \draw (3.5,0.2) ellipse (1cm and 0.65cm);
        \draw node at (3.5,-0.8) {$BP_{k-1}(t)$};
        \filldraw (2.10,2.6) circle (2pt) node[align=left,xshift = -0.35cm] {$\pi_{1}$};
        \filldraw (1.8,2) circle (2pt) node[align=left,xshift = -0.35cm] {$\pi_{2}$};
        \draw [thick] (2.1,2.6) -- (1.8,2);
        \filldraw (2.1,1.4) circle (2pt) node[align=left,xshift = -0.35cm] {$\pi_{3}$};
        \draw [thick] (1.8,2.0) -- (2.1,1.4);
        \filldraw (3,0.2) circle (2pt) node[align=left,yshift = -0.25cm] {$\tau_{2}$};
        \draw [thick] (2.1,1.4) -- (3,0.20);
        \filldraw (4,0.2) circle (2pt) node[align=left,yshift = -0.25cm] {$\tau_{1}$};
        \draw [thick] (3,0.20) -- (4,0.2);
        \filldraw (5,1.5) circle (2pt) node[align=left,xshift = 0.35cm] {$\sigma_{2}$};
        \draw [thick] (4,0.2) -- (5,1.50);
        \filldraw (5,2.5) circle (2pt) node[align=left,xshift = 0.35cm] {$\sigma_{1}$};
        \draw [thick] (5,1.5) -- (5,2.5);
        \filldraw (4,3.8) circle (2pt) node[align=left,yshift = 0.25cm] {$\rho_{2}$};
        \draw [thick] (5,2.5) -- (4,3.8);
        \filldraw (3,3.8) circle (2pt) node[align=left,yshift = 0.25cm] {$\rho_{1}$};
        \draw [thick] (4,3.8) -- (3,3.8);
        \draw [thick] (3,3.8) -- (2.1,2.6);
        \end{tikzpicture}
        \caption{pairwise different absolute values.}
        \label{f:(3+2+2+2)d}
        \end{subfigure}
        \begin{subfigure}{0.45\textwidth} 
        \begin{tikzpicture}[scale=0.8]
            \draw (2,2) ellipse (0.9cm and 1cm);
            \draw node at (0.7,0.8) {$BP_{k-1}(p)$};
            \draw (5,2) ellipse (0.9cm and 1cm);
            \draw node at (3.5,4.8) {$BP_{k-1}(q)$};
            \draw (3.5,3.8) ellipse (1cm and 0.65cm);
            \draw node at (6.45,3.15) {$BP_{k-1}(\uline{p})$};
            \draw (3.5,0.2) ellipse (1cm and 0.65cm);
            \draw node at (3.5,-0.8) {$BP_{k-1}(t)$};
            \filldraw (2.10,2.6) circle (2pt) node[align=left,xshift = -0.35cm] {$\pi_{1}$};
            \filldraw (1.8,2) circle (2pt) node[align=left,xshift = -0.35cm] {$\pi_{2}$};
            \draw [thick] (2.1,2.6) -- (1.8,2);
            \filldraw (2.1,1.4) circle (2pt) node[align=left,xshift = -0.35cm] {$\pi_{3}$};
            \draw [thick] (1.8,2.0) -- (2.1,1.4);
            \filldraw (3,0.2) circle (2pt) node[align=left,yshift = -0.25cm] {$\tau_{2}$};
            \draw [thick] (2.1,1.4) -- (3,0.20);
            \filldraw (4,0.2) circle (2pt) node[align=left,yshift = -0.25cm] {$\tau_{1}$};
            \draw [thick] (3,0.20) -- (4,0.2);
            \filldraw (5,1.5) circle (2pt) node[align=left,xshift = 0.3cm] {$\uline{\pi_{2}}$};
            \draw [thick] (4,0.2) -- (5,1.50);
            \filldraw (5,2.5) circle (2pt) node[align=left,xshift = 0.3cm] {$\uline{\pi_{1}}$};
            \draw [thick] (5,1.5) -- (5,2.5);
            \filldraw (4,3.8) circle (2pt) node[align=left,yshift = 0.25cm] {$\rho_{2}$};
            \draw [thick] (5,2.5) -- (4,3.8);
            \filldraw (3,3.8) circle (2pt) node[align=left,yshift = 0.25cm] {$\rho_{1}$};
            \draw [thick] (4,3.8) -- (3,3.8);
            \draw [thick] (3,3.8) -- (2.1,2.6);
        \end{tikzpicture}
        \caption{equal absolute values non-adjacent.}
        \label{f:(3+2+2+2)3-2}
        \end{subfigure}\\
        \caption{9-cycles incident upon four copies of $BP_{k-1}$ with vertex partition (3+2+2+2).}
        \end{figure}
\item[CASE III :-] A cycle incident upon four copies of $BP_{k-1}$.

Due to the constraints on the part sizes in our partition, there can be only one possibility (3+2+2+2).

Let the four copies used be $BP_{k-1}(p)$, $BP_{k-1}(q)$, $BP_{k-1}(s)$, and $BP_{k-1}(t)$ with three vertices in $BP_{k-1}(p)$ and two vertices in each of the other copies. Let us assume, without loss of generality, that one vertex of $BP_{k-1}(p)$ is adjacent to a vertex of $BP_{k-1}(q)$. Here the absolute values of $p,q,s,$ and $t$ may not be distinct. By Lemma \ref{lem:31} only non-adjacent copies can have the same absolute value. This gives rise to three subcases, which we describe below.

\begin{enumerate}
    \item The absolute values of $p,q,s$ and $t$ are pairwise different (see Figure~\ref{f:(3+2+2+2)d}).

        \begin{figure}
        \ContinuedFloat
        \begin{subfigure}{0.45\textwidth}
        \begin{tikzpicture}[scale=0.8]
            \draw (2,2) ellipse (0.9cm and 1cm);
            \draw node at (0.7,0.8) {$BP_{k-1}(p)$};
            \draw (5,2) ellipse (0.9cm and 1cm);
            \draw node at (3.5,4.8) {$BP_{k-1}(q)$};
            \draw (3.5,3.8) ellipse (1cm and 0.65cm);
            \draw node at (6.45,3.15) {$BP_{k-1}(s)$};
            \draw (3.5,0.2) ellipse (1cm and 0.75cm);
            \draw node at (3.5,-0.8) {$BP_{k-1}(\uline{q})$};
            \filldraw (2.10,2.6) circle (2pt) node[align=left,xshift = -0.35cm] {$\pi_{1}$};
            \filldraw (1.8,2) circle (2pt) node[align=left,xshift = -0.35cm] {$\pi_{2}$};
            \draw [thick] (2.1,2.6) -- (1.8,2);
            \filldraw (2.1,1.4) circle (2pt) node[align=left,xshift = -0.35cm] {$\pi_{3}$};
            \draw [thick] (1.8,2.0) -- (2.1,1.4);
            \filldraw (3,0.2) circle (2pt) node[align=left,yshift = -0.25cm] {$\uline{\rho_{2}}$};
            \draw [thick] (2.1,1.4) -- (3,0.20);
            \filldraw (4,0.2) circle (2pt) node[align=left,yshift = -0.25cm] {$\uline{\rho_{1}}$};
            \draw [thick] (3,0.20) -- (4,0.2);
            \filldraw (5,1.5) circle (2pt) node[align=left,xshift = 0.35cm] {$\sigma_{2}$};
            \draw [thick] (4,0.2) -- (5,1.50);
            \filldraw (5,2.5) circle (2pt) node[align=left,xshift = 0.35cm] {$\sigma_{1}$};
            \draw [thick] (5,1.5) -- (5,2.5);
            \filldraw (4,3.8) circle (2pt) node[align=left,yshift = 0.25cm] {$\rho_{2}$};
            \draw [thick] (5,2.5) -- (4,3.8);
            \filldraw (3,3.8) circle (2pt) node[align=left,yshift = 0.25cm] {$\rho_{1}$};
            \draw [thick] (4,3.8) -- (3,3.8);
            \draw [thick] (3,3.8) -- (2.1,2.6);
        \end{tikzpicture}
        \caption{equal absolute values non-adjacent}
        \label{f:(3+2+2+2)2-2}
        \end{subfigure}
        \begin{subfigure}{0.45\textwidth}
        \begin{tikzpicture}[scale=0.8]
        \draw (2,2) ellipse (0.9cm and 1cm);
        \draw node at (0.7,0.8) {$BP_{k-1}(p)$};
        \draw (5,2) ellipse (0.9cm and 1cm);
        \draw node at (3.5,4.8) {$BP_{k-1}(q)$};
        \draw (3.5,3.8) ellipse (1cm and 0.65cm);
        \draw node at (6.45,3.15) {$BP_{k-1}(\uline{p})$};
        \draw (3.5,0.2) ellipse (1cm and 0.75cm);
        \draw node at (3.5,-0.8) {$BP_{k-1}(\uline{q})$};
        \filldraw (2.10,2.6) circle (2pt) node[align=left,xshift = -0.35cm] {$\pi_{1}$};
        \filldraw (1.8,2) circle (2pt) node[align=left,xshift = -0.35cm] {$\pi_{2}$};
        \draw [thick] (2.1,2.6) -- (1.8,2);
        \filldraw (2.1,1.4) circle (2pt) node[align=left,xshift = -0.35cm] {$\pi_{3}$};
        \draw [thick] (1.8,2.0) -- (2.1,1.4);
        \filldraw (3,0.2) circle (2pt) node[align=left,yshift = -0.25cm] {$\uline{\rho_{2}}$};
        \draw [thick] (2.1,1.4) -- (3,0.20);
        \filldraw (4,0.2) circle (2pt) node[align=left,yshift = -0.25cm] {$\uline{\rho_{1}}$};
        \draw [thick] (3,0.20) -- (4,0.2);
        \filldraw (5,1.5) circle (2pt) node[align=left,xshift = 0.35cm] {$\uline{\pi_{2}}$};
        \draw [thick] (4,0.2) -- (5,1.50);
        \filldraw (5,2.5) circle (2pt) node[align=left,xshift = 0.35cm] {$\uline{\pi_{1}}$};
        \draw [thick] (5,1.5) -- (5,2.5);
        \filldraw (4,3.8) circle (2pt) node[align=left,yshift = 0.25cm] {$\rho_{2}$};
        \draw [thick] (5,2.5) -- (4,3.8);
        \filldraw (3,3.8) circle (2pt) node[align=left,yshift = 0.25cm] {$\rho_{1}$};
        \draw [thick] (4,3.8) -- (3,3.8);
        \draw [thick] (3,3.8) -- (2.1,2.6);
        \end{tikzpicture}
        \caption{two pairs of equal absolute values.}
        \label{f:(3+2+2+2)p}
        \end{subfigure}
        
        \caption{9-cycles incident upon four copies of $BP_{k-1}$ with vertex partition (3+2+2+2).}
    \end{figure}

    Since none are opposites, $p,q,s,$ and $t$, or their opposites, are present in all the signed permutations in the cycle. Then depending upon the relative position and signs of $s$ and $t$ in our first signed permutation four cases arise. Since $\rho_2$ is two reversals away from $\pi_1$ with one reversing all elements and $\rho_2$ must begin with $\uline{s}$ it is necessary that $\uline{s}$ be in $\pi_1$. The four possible cases of $\pi_1$ are $[\uline{q} X \uline{s} Y t Z p]$, $[\uline{q} X \uline{s} Y \uline{t} Z p]$, $[\uline{q} X t Y \uline{s} Z p]$, and $[\uline{q} X \uline{t} Y \uline{s} Z p]$.
    
    \begin{enumerate}
        \item If $\pi_{1}=[\uline{q} X \uline{s} Y t Z p]$, then $\rho_{1}=\pi_{1}r^B_k=[\uline{p} \overline{Z} \uline{t} \overline{Y} s \overline{X} q]$, $\rho_{2}=\rho_{1}r^B_{|Y|+|Z|+3}=[\uline{s} Y t Z p$ $\overline{X} q]$, and $\sigma_{1}=\rho_{2}r^B_k=[\uline{q} X \uline{p} \overline{Z} \uline{t} \overline{Y} s]$. Now $\sigma_{2}$ must have $\uline{t}$ in its first position, which is not possible in one reversal from $\sigma_{1}$. So this case does not yield any 9-cycles.
        
        \item If $\pi_{1}=[\uline{q} X \uline{s} Y \uline{t} Z p]$, then $\rho_{1}=\pi_{1}r^B_k=[\uline{p} \overline{Z} t \overline{Y} s \overline{X} q]$, $\rho_{2}=\rho_{1}r^B_{|Y|+|Z|+3}=[\uline{s} Y \uline{t} Z p$ $\overline{X} q]$, $\sigma_{1}=\rho_{2}r^B_k=[\uline{q} X \uline{p} \overline{Z} t \overline{Y} s]$, $\sigma_{2}=\sigma_{1}r^B_{|X|+|Z|+3}=[\uline{t} Z p \overline{X} q \overline{Y} s]$, and $\tau_{1}=\sigma_{2}r^B_k=[\uline{s} Y  \uline{q} X \uline{p} \overline{Z} t]$. Now $\tau_{2}$ must have $\uline{p}$ in its first position, which is not possible in one reversal from $\tau_{1}$. So this case does not yield any 9-cycles.

        \item If $\pi_{1}=[\uline{q} X t Y \uline{s} Z p]$, then $\rho_{1}=\pi_{1}r^B_k=[\uline{p} \overline{Z} s \overline{Y} \uline{t} \overline{X} q]$, $\rho_{2}=\rho_{1}r^B_{|Z|+2}=[\uline{s} Z p \overline{Y} \uline{t} \overline{X} q]$, $\sigma_{1}=\rho_{2}r^B_k=[\uline{q} X t Y \uline{p} \overline{Z} s]$, $\sigma_{2}=\sigma_{1}r^B_{|X|+2}=[\uline{t} \overline{X} q Y \uline{p} \overline{Z} s]$, $\tau_{1}=\sigma_{2}r^B_k=[\uline{s} Z p \overline{Y} \uline{q} X t]$, $\tau_{2}=\tau_{1}r^B_{|Z|+2}=[\uline{p} \overline{Z} s \overline{Y} \uline{q} X t]$, and $\pi_{3}=\tau_{2}r^B_k=[\uline{t} \overline{X} q Y \uline{s} Z p]$. We need a path of length two from $\pi_{3}$ to $\pi_{1}$, however, $\pi_{3}r^B_{|X|+2}=\pi_{1}$ which is a path of length one. As there are no 3-cycles in the burnt pancake graph, a path of length two between $\pi_{3}$ and $\pi_{1}$ does not exist. Hence this case does not yield any 9-cycles.
        
        \item If $\pi_{1}=[\uline{q} X \uline{t} Y \uline{s} Z p]$, then $\rho_{1}=\pi_{1}r^B_k=[\uline{p} \overline{Z} s \overline{Y} t \overline{X} q]$, $\rho_{2}=\rho_{1}r^B_{|Z|+2}=[\uline{s} Z p \overline{Y} t \overline{X} q]$, and $\sigma_{1}=\rho_{2}r^B_k=[\uline{q} X \uline{t} Y \uline{p} \overline{Z} s]$. Now $\sigma_{2}$ must have $\uline{t}$ in its first position, which is not possible in one reversal from $\sigma_{1}$. So this case does not yield any 9-cycles.
    \end{enumerate}
    
    \item The absolute values of only one pair among $p,q,s,t$ are the same. This gives rise to two cases. One where the pair of copies with opposite signed last elements have only two vertices each. The other where the pair of copies with opposite signed last elements includes the one copy with three vertices.

    \begin{enumerate}
        \item Say that $s=\uline{p}$, $|q| \neq |t|$ (see Figure~\ref{f:(3+2+2+2)3-2}). In this case $\pi_{1}$ can be either $[\uline{q} X t Y p]$ or $[\uline{q} X \uline{t} Y p]$. 
        
        If $\pi_{1}=[\uline{q} X t Y p]$, then $\rho_{1}=\pi_{1}r^B_k=[\uline{p} \overline{Y} \uline{t} \overline{X} q]$, $\rho_{2}=\rho_{1}r^B_1=[p \overline{Y} \uline{t} \overline{X} q]$, $\uline{\pi_{1}}=\rho_{2}r^B_k=[\uline{q} X t Y \uline{p}]$, $\uline{\pi_{2}}=\uline{\pi_{1}}r^B_{|X|+2}=[\uline{t} \overline{X} q Y \uline{p}]$, $\tau_{1}=\uline{\pi_{2}}r^B_k=[p \overline{Y} \uline{q} X t]$, $\tau_{2}=\tau_{1}r^B_1=[\uline{p} \overline{Y} \uline{q} X t]$, and  $\pi_{3}=\tau_{2}r^B_k=[\uline{t} \overline{X} q Y p]$. We need a path of length two from $\pi_{3}$ to $\pi_{1}$ but $\pi_{3}r^B_{|X|+2}=\pi_{1}$ which is a path of length one. As there are no 3-cycles in the burnt pancake graph a path of length two between $\pi_{3}$ and $\pi_{1}$ does not exist. Hence this case does not yield any 9-cycle.
        
        If $\pi_{1}=[\uline{q} X \uline{t} Y p]$, then $\rho_{1}=\pi_{1}r^B_k=[\uline{p} \overline{Y} t \overline{X} q]$, $\rho_{2}=\rho_{1}r^B_1=[p \overline{Y} t \overline{X} q]$, $\uline{\pi_{1}}=\rho_{2}r^B_k=[\uline{q} X \uline{t} Y \uline{p}]$. As $\uline{\pi_{2}}$ must have $\uline{t}$ in its first position, which is not possible in one reversal from $\uline{\pi_{2}}$. So this case does not yield a 9-cycle.

        \item Say that $t=\uline{q}$, $|p| \neq |s|$ (see Figure~\ref{f:(3+2+2+2)2-2}).
        
        As $\rho_{2}$ will have $\uline{s}$ at first position $\pi_{1}=[\uline{q}X \uline{s}Y p]$. This gives $\rho_{1}=\pi_{1}r^B_k=[\uline{p}\overline{Y}s\overline{X}q]$, $\rho_{2}=\rho_{1}r^B_{|Y|+2}=[\uline{s}Y p\overline{X}q]$, $\sigma_{1}=\rho_{2}r^B_k=[\uline{q}X \uline{p} \overline{Y}s]$, $\sigma_{2}=\sigma_{1}r^B_1=[qX \uline{p} \overline{Y}s]$, $\uline{\rho_{1}}=\sigma_{2}r^B_k=[\uline{s}Y p\overline{X}\uline{q}]$, $\uline{\rho_{2}}=\uline{\rho_{1}}r^B_{|Y|+2}=[\uline{p}\overline{Y}s\overline{X}\uline{q}]$, $\pi_{3}=\uline{\rho_{2}}r^B_k=[qX \uline{s}Y p]$. We need a path of length two from $\pi_{3}$ to $\pi_{1}$ but $\pi_{3}r^B_{1}=\pi_{1}$ which is a path of length one. As there are no 3-cycles in burnt pancake graph a path of length two between $\pi_{3}$ and $\pi_{1}$ does not exist. Hence this case does not give any 9-cycle.
    \end{enumerate}
    \item The absolute values of the two pairs among $p,q,s,t$ are the same, i.e $s=\uline{p}$ and $t=\uline{q}$ (see Figure~\ref{f:(3+2+2+2)p}).

        Let $\pi_{1}=[\uline{q} X p]$. This gives $\rho_{1}=\pi_{1}r^B_k=[\uline{p} \overline{X} q]$, $\rho_{2}=\rho_{1}r^B_1=[p \overline{X} q]$, $\uline{\pi_{1}}=\rho_{2}r^B_k=[\uline{q} X \uline{p}]$, $\uline{\pi_{2}}=\uline{\pi_{1}}r^B_1=[q X \uline{p}]$, $\uline{\rho_{1}}=\uline{\pi_{2}}r^B_k=[p \overline{X} \uline{q}]$, $\uline{\rho_{2}}=\uline{\rho_{1}}r^B_1=[\uline{p} \overline{X} \uline{q}]$, and $\pi_{3}=\uline{\rho_{2}}r^B_k=[q X p]$. We need a path of length two from $\pi_{3}$ to $\pi_{1}$ but $\pi_{3}r^B_{1}=\pi_{1}$, which is a path of length one. As there are no 3-cycles in the burnt pancake graph a path of length two between $\pi_{3}$ and $\pi_{1}$ does not exist. Hence this case does not yield any 9-cycle.
\end{enumerate}

So there are no cycles of the form (3+2+2+2) in the burnt pancake graph.
\end{enumerate}

This finalizes all possible partitions of the vertices, and thus gives all possible 9-cycles in $BP_{k}$.
\end{proof}

\section{Concluding Remarks}\label{s:conclusion}

In the preceding sections, we have provided explicit formulas for the number of pancake and burnt pancake stacks with $n$ pancakes that require four flips to be sorted, utilizing entirely elementary methods using cycle classification of the pancake and burnt pancake graph and the principle of inclusion-exclusion, as well as providing a classification of all 9-cycles in the burnt pancake graph. 

Having a classification of longer cycles in $P_n$ and $BP_n$ might allow us to take a similar approach to what we used in the proof of Theorem~\ref{t:snk=4} and Theorem~\ref{t:bsnk=4}. Currently, there is no published classification of the 10-cycles for either $P_n$ and $BP_n$. The authors have worked out said classification, though the number of canonical forms alone makes it hard to do a systematic, manual approach using PIE. For example, we found 59 canonical forms for 10-cycles in $P_n$ and 8 canonical forms for 10-cycles in $BP_n$. However, the process would be extremely tedious. 

Utilizing structural characteristics of permutations satisfying certain properties, Homberger and Vatter~\cite{HV16} gave an algorithm that can be used to enumerate certain permutations. In particular, they prove that for ``sufficiently large" $n$, the number of permutations of $[n]$ that can be sorted with $k$ prefix reversals is given by a polynomial. More specifically, if one defines
\begin{equation}\label{eq:tilde}
\widetilde{R}_k(n):=\sum_{i=0}^kR_i(n),
\end{equation} then for sufficiently large $n$, $\widetilde{R}_k(n)$ is given by a polynomial. From (\ref{eq:tilde}) it is possible to compute $R_k(n)$ from $\widetilde{R}_k(n)$ if $R_i(n)$ is known for $0\leq i<k$. In this light, and using the output of the Homberger-Vatter algorithm from~\cite{HV16} for $\widetilde{R}_k(n)$ with $k=5,6$, one obtains the following.
\begin{enumerate}
    \item If $n\geq 5$, then 
\begin{displaymath}R_5(n)= \frac{1}{6}\left(6n^5-65n^4+173n^3+296n^2-1724n+1590\right).\end{displaymath}
    \item If $n\geq6$, then
    \begin{displaymath}R_6(n)= \frac{1}{60}\left(60n^6 - 883n^5 + 3140n^4 + 10775 n^3 - 91400 n^2 + 171068n -58020\right).\end{displaymath}
\end{enumerate} These polynomials explain all the nonzero values for the columns $k=5,6$ in Table~\ref{tab:sn}. The situation for $k>6$ is a bit more interesting, as the polynomial obtained by the Homberger-Vatter algorithm does not explain all nonzero values of $R_k(n)$. Indeed, their algorithm produces polynomials that are valid for ``sufficiently large" $n$. For example, the algorithm correctly computes $\widetilde{R}_7(n)$ if $n>7$. Indeed, if $n>7$, 

\begin{align}\label{eq:r7}
R_7(n)&=\widetilde{R}_7(n)-\sum_{i=1}^6R_i(n)\notag\\
&=\frac{1}{240} (240 n^7- 4619  n^6+21881 n^5+ 109275 n^4 -1372445 n^3+\notag\\
&\hspace{5cm} 4476344 n^2-4550196 n-850320).
\end{align}

However, (\ref{eq:r7}) does not account for $R_7(6)=2$ and $R_7(7)=1016$. Therefore, there is no polynomial that would explain all the nonzero values of $R_7(n)$. 

After computing $\widetilde{R}_8(n)$ using the Homberger-Vatter algorithm, which took several days using a system with Dual Xeon CPUs and 256GB of RAM, we found a polynomial that explains most of the nozero entries of the column $k=8$ in Table~\ref{tab:sn}. Namely, if $n>7$,

\begin{align}\label{eq:r8}
R_8(n)&=\widetilde{R}_8(n)-\sum_{i=1}^7R_i(n)\notag\\
&=\frac{1}{5040}(5040 n^8 - 122683 n^7 + 759857 n^6 + 4519067 n^5 -79101715 n^4 +\notag\\
  &\hspace{3cm}  364661948 n^3 - 561161062 n^2 - 267373812 n + 844945920).
\end{align}

Once again, (\ref{eq:r8}) does not account for $R_8(7)=35$, so once again, there is no polynomial that explains all the nonzero values of $R_8(k)$.

The situation for $R^B_k(n)$ does not seem to be as mysterious. While the Homberger-Vatter algorithm does not apply to signed permutations, the numbers $R^B_k(n)$ seem to be given by polynomials. Using a standard polynomial fitting procedure with the data from Table~\ref{tab:bn}, we obtain the following conjecture. 

\begin{conjecture}\label{con:bn} If $n\geq1$, then
\begin{enumerate}
    \item[(i)] $R^B_5(n)=\frac{1}{6}n(n-1)(n-2)(6n^2-17n+3)$,
    \item[(ii)] $R^B_6(n)=\frac{1}{60}n(n-1)(n-2)(60n^3-343n^2+401n+284)$,
    \item[(iii)] $R^B_7(n)=\frac{1}{240}n(n-1)(n-2)(n-3)(240n^3-1499n^2+925n+5104)$,
    \item[(iv)] $R^B_8(n)=\frac{1}{5040}n(n-1)(n-2)(n-3)(5040n^4-52123n^3+113415n^2+314716n-1027242)$, and
    \item[(v)] $R^B_9(n)=\frac{1}{40320}(n-1) (n-2) (n-3) (n-4) (40320 n^5-444061 n^4+644746 n^3+6638777 n^2-18991470 n).$
\end{enumerate}
\end{conjecture}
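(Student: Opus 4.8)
The plan is to prove each formula by the same recursive inclusion–exclusion scheme used for $R_4^B(n)$ in Theorem~\ref{t:bsnk=4}, pushed up to length-$k$ flip words. For fixed $k$, I would partition the reduced words of length $k$ from the identity according to the position of the first occurrence of the generator $r^B_n$ in them — equivalently, according to how the corresponding path distributes its vertices among the $2n$ copies of $BP_{n-1}$ — generalizing the sets $BA_0,\ldots,BA_{k-1}$ together with the ``no $r^B_n$'' set playing the role of $BA_4$. Because exactly one copy, $BP_{n-1}(n)$, carries the sub-distance-$k$ combinatorics of $BP_{n-1}$, the recursive structure yields a first-order recurrence $R_k^B(n) = R_k^B(n-1) + P_k(n)$, where $P_k(n)$ counts the distance-$k$ permutations whose minimal galleries are not confined to $BP_{n-1}(n)$. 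Solving this recurrence with a base case read off from Table~\ref{tab:bn} then produces a closed formula, exactly as in the $k=4$ case.

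The content of $P_k(n)$ is extracted from cycle classifications. A length-$k$ word fails to be reduced precisely when concatenating it with a shorter gallery to the same endpoint closes up a cycle of length strictly less than $2k$ through $r^B_n$, while two distinct reduced length-$k$ words reach a common vertex precisely when they close up a cycle of length exactly $2k$ (or shorter, if they share a prefix or suffix). Thus computing $P_5(n)$ requires the 8-, 9-, and 10-cycles through $r^B_n$: the 8- and 9-cycle forms are Theorem~\ref{t:burntpancakeclassification} and Theorem~\ref{t:canonical9cyclesB}, and the eight canonical 10-cycle forms noted in Section~\ref{s:conclusion} supply the rest. For each canonical form one restricts to the specializations in which a prescribed number of $r^B_n$-labels appear in prescribed positions, counts the admissible index choices (a polynomial in $n$), and applies PIE over the copies exactly as in the computations of $|BA_0\cap BA_1|$, $|BA_0\cap BA_3|$, and $|BA_1\cap BA_2|$; the generalizations of Lemma~\ref{l:4.5} and Corollary~\ref{c:4vertices} rule out the impossible vertex partitions and keep the number of contributing forms finite.

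To certify that the resulting expression is genuinely the conjectured polynomial rather than merely matching data, I would first establish that, for each fixed $k$, $R_k^B(n)$ agrees with a polynomial in $n$ of degree $k$ (with leading coefficient $1$, consistent with the $\sim n^k$ reduced words) once $n$ exceeds an explicit threshold $N_k$. This is the signed analogue of the Homberger–Vatter eventual-polynomiality result~\cite{HV16}: one shows that every distance-$k$ signed permutation is determined by its restriction to a window of bounded size together with an identity tail, so that the count over $[\pm n]$ stabilizes to a degree-$k$ polynomial. Given such an $N_k$ and the degree bound, agreement of the conjectured polynomial with the tabulated values at any $k+1$ consecutive $n \ge N_k$ forces equality for all $n \ge N_k$; the finitely many values $n < N_k$ are then checked directly against Table~\ref{tab:bn}, so that the asserted validity down to $n=1$ reduces to a finite verification.

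The main obstacle is twofold. On the combinatorial side, there is at present no published classification of cycles in $BP_n$ longer than the 9-cycles of Theorem~\ref{t:canonical9cyclesB}, and the number of canonical forms grows rapidly — the authors already report eight forms for 10-cycles — so that a by-hand PIE of the kind carried out for $R_4^B(n)$ becomes infeasible for $k=7,8,9$, where cycles of length up to $2k=18$ would be needed. On the analytic side, making the \emph{eventual-polynomiality} step effective — pinning down the threshold $N_k$ together with the degree bound so that finitely many tabulated values suffice — is the delicate part, since the signed setting is not covered by the existing Homberger–Vatter machinery and would require its own window-stabilization argument.
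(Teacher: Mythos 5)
You are attempting to prove what the paper itself presents only as Conjecture~\ref{con:bn}: the authors obtained these five formulas by a polynomial-fitting procedure applied to the computed values in Table~\ref{tab:bn}, and they explicitly state that the conjecture ``would follow from a result similar to Homberger--Vatter'' for signed permutations, which does not yet exist. So there is no proof in the paper to compare against, and the question is whether your programme actually closes the gap. It does not, and the two holes are exactly the ones you concede in your final paragraph, so they must be regarded as genuine gaps rather than routine details. First, the inclusion--exclusion/recursion step for $R_k^B(n)$ requires a canonical-form classification of all cycles through $r_n^B$ of length up to $2k$ in $BP_n$, i.e.\ up to $18$-cycles for item (v); but only the $8$-cycles (Theorem~\ref{t:burntpancakeclassification}) and $9$-cycles (Theorem~\ref{t:canonical9cyclesB}) are classified, the eight $10$-cycle forms mentioned in Section~\ref{s:conclusion} are unpublished, and nothing exists beyond length $10$. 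Your appeal to ``generalizations of Lemma~\ref{l:4.5} and Corollary~\ref{c:4vertices}'' to control how longer cycles distribute over copies of $BP_{n-1}$ is likewise asserted, not proved; Corollary~\ref{c:4vertices} is specific to $8$-cycles, and already the $9$-cycle analysis in the paper needed a separate case-by-case argument over vertex partitions.

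Second, and more fundamentally, your certification step rests on an eventual-polynomiality theorem for $R_k^B(n)$ that nobody has proved: the Homberger--Vatter machinery of~\cite{HV16} does not apply to $B_n$ (the paper says so in the introduction), and your proposed ``window-stabilization argument'' is a one-sentence sketch of precisely the open problem. Without it, agreement with finitely many tabulated values proves nothing, since a non-polynomial function can match every entry of Table~\ref{tab:bn}. The unsigned case is a concrete warning that this step is delicate: $R_7(n)$ and $R_8(n)$ are \emph{not} given by any single polynomial, because the eventual-polynomiality threshold exceeds the small exceptional values $R_7(6)$, $R_7(7)$, and $R_8(7)$. An effective threshold $N_k$ is therefore essential, and even granting one, the available data may not suffice: for $k=9$ the table contains only seven nonzero values ($n=5,\dots,11$), so if $N_9>6$ you could not even complete the finite verification for a degree-$9$ polynomial from the known roots and table entries. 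In short, your plan is a sensible research programme consistent with the authors' own suggested route, but as a proof it is incomplete at both load-bearing steps, and the statement remains a conjecture.
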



These polynomials also conveniently explain the zero entries of Table~\ref{tab:bn}, which seems to indicate that $R^B_k(n)$ are better behaved than those of $R_k(n)$.

Another question that might give interesting results is proving any summation identities that $R_k(n)$ or $R^B_k(n)$ satisfy. From their definitions, it is clear that
\[
\sum_{k\geq0}R_k(n)=|S_n|=n!
\] and that
\[
\sum_{k\geq0}R^B_k(n)=|B_n|=2^nn!.
\] Notice that determining the maximum value of $k$, for a given $n$, such that $R_k(n)$ (or $R^B_k(n)$) is not zero is equivalent to the classic pancake problem (or the burnt pancake problem).

A less trivial identity can be found by looking at the explicit formulas that describe the nonzero values of $R_k(n)$ for $0\leq k\leq 6$, namely, we have the following Corollary. 

\begin{cor}\label{cor:sum} 
If $k\leq6$ and $R_k(n-i)>0$ for $1\leq  i\leq k+1$, then
\[
 R_k(n)=\sum_{i=1}^{k+1}(-1)^{i+1}\binom{k+1}{i}R_{k}(n-i).
 \]
 \end{cor}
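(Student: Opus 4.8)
The plan is to recognize the stated identity as the assertion that the $(k+1)$-th finite difference of $R_k$, regarded as a function of $n$, vanishes, and then to invoke the fact that each $R_k$ with $k \le 6$ coincides with a polynomial of degree $k$ on the relevant range. First I would rewrite the claim: bringing the left-hand side into the sum and using $\binom{k+1}{0}=1$, the identity
\[
R_k(n) = \sum_{i=1}^{k+1}(-1)^{i+1}\binom{k+1}{i}R_k(n-i)
\]
is equivalent to
\[
\sum_{i=0}^{k+1}(-1)^{i}\binom{k+1}{i}R_k(n-i) = 0,
\]
which is precisely $\nabla^{k+1}R_k(n)=0$, where $\nabla$ denotes the backward difference operator $\nabla f(n)=f(n)-f(n-1)$, so that $\nabla^{m}f(n)=\sum_{i=0}^{m}(-1)^i\binom{m}{i}f(n-i)$.

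Next I would record that, for each $0 \le k \le 6$, there is a polynomial $p_k$ of degree exactly $k$ and an integer $M_k$ (the least $m$ with $R_k(m)>0$) such that $R_k(m)>0$ precisely for $m\ge M_k$ and $R_k(m)=p_k(m)$ for all $m\ge M_k$. For $k\le 3$ these are the elementary formulas of Section~\ref{sec:preliminaries}, for $k=4$ it is Theorem~\ref{t:snk=4}, and for $k=5,6$ they are the formulas recorded above; the required positivity and threshold facts are read off directly from Table~\ref{tab:sn}. I would stress that the formulas genuinely fail below threshold (for instance $p_4(3)=1\neq 0=R_4(3)$), which is exactly what makes the hypothesis of the corollary indispensable.

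The main step is then a boundary argument. Applying the hypothesis to the smallest shifted argument gives $R_k(n-k-1)>0$, hence $n-k-1\ge M_k$; therefore every argument $n-i$ with $0\le i\le k+1$ satisfies $n-i\ge M_k$, the top value $i=0$ included. Consequently $R_k(n-i)=p_k(n-i)$ for all $0\le i\le k+1$, and
\[
\sum_{i=0}^{k+1}(-1)^{i}\binom{k+1}{i}R_k(n-i) = \sum_{i=0}^{k+1}(-1)^{i}\binom{k+1}{i}p_k(n-i) = \nabla^{k+1}p_k(n) = 0,
\]
the last equality holding because $\nabla$ lowers the degree of a polynomial by one and $\deg p_k = k < k+1$. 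Rearranging recovers the stated identity.

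The one delicate point, and the step I expect to require the most care, is precisely this boundary bookkeeping: one must verify that the hypothesis forces all $k+2$ consecutive arguments $n,n-1,\dots,n-k-1$ simultaneously into the polynomial regime, so that none of them is secretly $0$ while $p_k$ is nonzero. This is exactly why the corollary is phrased through the positivity of the $k+1$ shifted values rather than through a bare inequality on $n$, and it is the reason the statement is restricted to $k\le 6$, the range in which a single degree-$k$ polynomial is known to govern all positive values of $R_k$.
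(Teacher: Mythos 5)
Your proof is correct and takes essentially the approach the paper intends: the identity is the vanishing of the $(k+1)$-st backward difference of $R_k$, which follows because for $k\le 6$ the nonzero values of $R_k$ are governed by a single polynomial of degree $k$. Your explicit boundary bookkeeping (using $R_k(n-k-1)>0$ to force all $k+2$ arguments $n,n-1,\dots,n-k-1$ into the polynomial regime, where the positivity threshold and the polynomial threshold coincide for each $k\le 6$) is exactly the detail the paper leaves implicit, and it correctly explains both the positivity hypothesis and the restriction to $k\le 6$.
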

 
 Since the polynomials (\ref{eq:r7}) and (\ref{eq:r8}) do not cover the values $R_7(6)$, $R_7(7)$, and $R_8(7)$, Corollary~\ref{cor:sum} would once again require $n$ to be ``sufficiently large.''
 
 For signed permutations, we have observed the following identity, which would follow from $R^B_k(n)$ being integer-valued polynomials and by using the Gregory-Newton interpolation formula for integer-valued polynomials (for background see~\cite{CC97}).
 
 \begin{conjecture}\label{con:sum}
 If $k,n\geq1$, then
 \begin{displaymath}R_k^B(n) =\sum_{j=1}^k \left(\sum_{i=0}^{k-j} (-1)^i \binom{i+j-1}{i}\binom{n}{i+j}\right)R_k^B(j), \text{ for } k \geq 1.\end{displaymath}
\end{conjecture}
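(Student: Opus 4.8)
The plan is to treat the claimed identity purely as an interpolation statement for the integer-valued polynomials $R_k^B(\cdot)$, exactly as suggested by the remark preceding the conjecture. Fix $k\geq 1$ and, invoking Conjecture~\ref{con:bn} (whose formulas exhibit $n\mapsto R_k^B(n)$ as an integer-valued polynomial $f_k$ of degree $k$), regard $R_k^B(n)$ as this polynomial $f_k(n)$. The one extra ingredient I would isolate first is the boundary value $f_k(0)=R_k^B(0)=0$: since $B_0$ consists only of the (empty) identity, no element requires $k\geq 1$ flips, so $R_k^B(0)=0$. Thus $f_k$ is a degree-$k$ integer-valued polynomial determined by the $k$ sampled values $R_k^B(1),\dots,R_k^B(k)$ together with the vanishing condition at $0$, which is precisely the data appearing on the right-hand side of Conjecture~\ref{con:sum}.

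The mechanics are the Gregory--Newton (finite-difference) expansion. I would write $f_k(n)=\sum_{m=0}^{k}\binom{n}{m}\,\Delta^m f_k(0)$, where $\Delta$ is the forward-difference operator and $\Delta^m f_k(0)=\sum_{l=0}^{m}(-1)^{m-l}\binom{m}{l}R_k^B(l)$. Using $f_k(0)=0$ kills the $m=0$ term and the $l=0$ contribution inside every $\Delta^m f_k(0)$, so the expansion already involves only $R_k^B(1),\dots,R_k^B(k)$. Interchanging the order of summation then collects, for each $j$ with $1\le j\le k$, the coefficient of $R_k^B(j)$ as a single alternating binomial sum over $m$ ranging from $j$ to $k$; reindexing by $m=i+j$ puts this coefficient into the bracketed form $\sum_{i}(-1)^i(\cdots)\binom{n}{i+j}$ displayed in the conjecture. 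I would carry out this reindexing carefully, since matching the inner binomial factor exactly is where the precise shape of the stated coefficient is pinned down.

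To finish, I would argue by agreement at nodes. Both sides of Conjecture~\ref{con:sum} are polynomials in $n$ of degree at most $k$, so it suffices to check equality at the $k+1$ integers $n=0,1,\dots,k$. At $n=0$ every $\binom{0}{i+j}$ with $i+j\ge 1$ vanishes, so the right-hand side is $0$, matching $R_k^B(0)=0$. For $n=m$ with $1\le m\le k$ the claim reduces to the single statement that the bracketed coefficient of $R_k^B(j)$, evaluated at $n=m$, equals the Kronecker delta $\delta_{jm}$; this is the standard fact that the Newton interpolation operator reproduces its sampled values, and can be verified by a short generating-function computation (writing $\binom{m}{i+j}=[y^{i+j}](1+y)^m$ and resumming the alternating series). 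I would then check the extreme regimes $n\le k$ and the smallest values of $k$ directly against Table~\ref{tab:bn} to secure the boundary cases. The genuine obstacle is none of this bookkeeping but the hypothesis itself: the entire argument rests on knowing that $R_k^B(n)$ is an integer-valued polynomial of degree exactly $k$ vanishing at $n=0$, i.e.\ on Conjecture~\ref{con:bn}. Absent a structural, cycle-classification-based proof of that polynomiality and of the degree, Conjecture~\ref{con:sum} can be derived only conditionally, which is exactly why it is stated as a conjecture.
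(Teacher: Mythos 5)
The paper contains no proof of this statement --- it is offered purely as a conjecture, with the remark that it ``would follow from $R^B_k(n)$ being integer-valued polynomials and by using the Gregory-Newton interpolation formula'' --- so your overall plan (condition on Conjecture~\ref{con:bn}, use $R^B_k(0)=0$, expand in the basis $\binom{n}{m}$, and note that only a conditional derivation is possible) is exactly the route the authors had in mind, and your identification of the needed inputs is correct.

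However, the step you yourself flag as delicate --- ``matching the inner binomial factor exactly'' --- does not go through, and this is a genuine gap, not bookkeeping. Carrying out your own expansion with $f=R^B_k$ and $f(0)=0$ gives $f(n)=\sum_{m=1}^{k}\binom{n}{m}\sum_{l=1}^{m}(-1)^{m-l}\binom{m}{l}f(l)$, and collecting the coefficient of $f(j)$ with the reindexing $m=i+j$ yields $\sum_{i=0}^{k-j}(-1)^i\binom{i+j}{i}\binom{n}{i+j}$: the binomial is $\binom{i+j}{i}$, not the $\binom{i+j-1}{i}$ of the statement (they differ by the factor $(i+j)/j$). The discrepancy is fatal to the literal statement: take $k=4$, where $R^B_4$ is not conjectural but proven in Theorem~\ref{t:bsnk=4}, and evaluate at $n=3$. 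The printed coefficients of $R^B_4(1),R^B_4(2),R^B_4(3),R^B_4(4)$ come out to $1,1,1,0$, so the right-hand side is $0+1+18+0=19\neq 18=R^B_4(3)$; with $\binom{i+j}{i}$ the coefficients are $0,0,1,0$ and the identity holds. Your node-check claim --- that the bracketed coefficient at $n=m$ equals $\delta_{jm}$ --- is likewise a property of the Newton coefficients $\binom{i+j}{i}$ and fails for the printed ones (at $k=4$, $n=2$ the $j=1$ coefficient is $\binom{2}{1}-\binom{2}{2}=1\neq 0$). So your argument, executed carefully, conditionally establishes a \emph{corrected} form of Conjecture~\ref{con:sum} and simultaneously shows the statement as printed is false even in a proven case; the place where your write-up asserts the match instead of verifying it is precisely where it breaks, and the $\binom{i+j-1}{i}$ in the statement is best regarded as a typo for $\binom{i+j}{i}$.
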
 

We have verified Conjecture~\ref{con:sum} for all the values in Table~\ref{tab:bn}. This conjecture would follow from a result similar to Homberger-Vatter giving that the $R^B_k(n)$ are all polynomials and continue to explain the zero and nonzero values. 

\section{Acknowledgments} 

A. Patidar was supported by the Global Talent Attraction Program (GTAP) administered by the School of Informatics, Computing, and Engineering at Indiana University. Furthermore, the authors thank V. Vatter for pointing out his and C. Homberger's paper~\cite{HV16} and for making their code publicly available. The authors also wish to thank the anonymous referees for their comments that helped improved the presentation of this paper. 

\bibliographystyle{alpha}
\bibliography{pancake}

\end{document}